\documentclass[a4paper,UKenglish,cleveref, autoref, thm-restate,authorcolumns]{lipics-v2019}
%This is a template for producing LIPIcs articles. 
%See lipics-manual.pdf for further information.
%for A4 paper format use option "a4paper", for US-letter use option "letterpaper"
%for british hyphenation rules use option "UKenglish", for american hyphenation rules use option "USenglish"
%for section-numbered lemmas etc., use "numberwithinsect"
%for enabling cleveref support, use "cleveref"
%for enabling autoref support, use "autoref"
%for anonymousing the authors (e.g. for double-blind review), add "anonymous"
%for enabling thm-restate support, use "thm-restate"

%\graphicspath{{./graphics/}}%helpful if your graphic files are in another directory

\nolinenumbers

\bibliographystyle{plainurl}% the mandatory bibstyle

\title{On LTL Model Checking for Low-Dimensional Discrete Linear Dynamical Systems}

%\titlerunning{Model Checking LTL on Linear Dynamical Systems} %TODO optional, please use if title is longer than one line

\author{Toghrul Karimov}{Max Planck Institute for Software Systems,
	Saarland Informatics Campus, Germany}{toghs@mpi-sws.org}
{}{}
%TODO mandatory, please use full name; only 1 author per \author macro; first two parameters are mandatory, other parameters can be empty. Please provide at least the name of the affiliation and the country. The full address is optional

\author{Jo{\"e}l Ouaknine}{Max Planck Institute for Software Systems, Saarland Informatics Campus, Germany
	\and
	Department of Computer Science, University of Oxford, United Kingdom}
{joel@mpi-sws.org}{}{Supported by ERC grant AVS-ISS (648701) and DFG grant 389792660 as part of TRR
	248 (see https://perspicuous-computing.science).}

\author{James Worrell}{Department of Computer Science, University of
	Oxford, United Kingdom}{jbw@cs.ox.ac.uk}
{}{Supported by EPSRC Fellowship EP/N008197/1.}

\authorrunning{T. Karimov, J. Ouaknine and J.Worrell} %TODO mandatory. First: Use abbreviated first/middle names. Second (only in severe cases): Use first author plus 'et al.'

\Copyright{Toghrul Karimov, Jo{\"e}l Ouaknine, James Worrell} %TODO mandatory, please use full first names. LIPIcs license is "CC-BY";  http://creativecommons.org/licenses/by/3.0/

\ccsdesc[100]{Theory of computation~Logic and verification} %TODO mandatory: Please choose ACM 2012 classifications from https://dl.acm.org/ccs/ccs_flat.cfm 

\keywords{Linear dynamical systems, Orbit Problem, LTL model checking} %TODO mandatory; please add comma-separated list of keywords

\category{} %optional, e.g. invited paper

\relatedversion{} %optional, e.g. full version hosted on arXiv, HAL, or other respository/website
%\relatedversion{A full version of the paper is available at \url{...}.}

\supplement{}%optional, e.g. related research data, source code, ... hosted on a repository like zenodo, figshare, GitHub, ...

%\funding{(Optional) general funding statement \dots}%optional, to capture a funding statement, which applies to all authors. Please enter author specific funding statements as fifth argument of the \author macro.

%\acknowledgements{I want to thank \dots}%optional

%\nolinenumbers %uncomment to disable line numbering

%\hideLIPIcs  %uncomment to remove references to LIPIcs series (logo, DOI, ...), e.g. when preparing a pre-final version to be uploaded to arXiv or another public repository

%Editor-only macros:: begin (do not touch as author)%%%%%%%%%%%%%%%%%%%%%%%%%%%%%%%%%%
\EventEditors{Javier Esparza and Daniel Kr{\'a}l'}
\EventNoEds{2}
\EventLongTitle{45th International Symposium on Mathematical Foundations of Computer Science (MFCS 2020)}
\EventShortTitle{MFCS 2020}
\EventAcronym{MFCS}
\EventYear{2020}
\EventDate{August 24--28, 2020}
\EventLocation{Prague, Czech Republic}
\EventLogo{}
\SeriesVolume{170}
\ArticleNo{50}
%%%%%%%%%%%%%%%%%%%%%%%%%%%%%%%%%%%%%%%%%%%%%%%%%%%%%%

\begin{document}

\maketitle

%TODO mandatory: add short abstract of the document
\begin{abstract}
	Consider a discrete dynamical system given by a square matrix $M \in
	\rationals^{d \times d}$ and a starting point $s \in
	\rationals^d$. The \emph{orbit} of such a system is the infinite
	trajectory $\langle s, Ms, M^2s, \ldots\rangle$. Given a collection
	$T_1, T_2, \ldots, T_m \subseteq \reals^d$ of semialgebraic sets, we
	can associate with each $T_i$ an atomic proposition $P_i$ which
	evaluates to \emph{true} at time $n$ if, and only if, $M^ns \in
	T_i$. This gives rise to the \emph{LTL Model-Checking Problem} for
	discrete linear dynamical systems: given such a system $(M,s)$ and an
	LTL formula over such atomic propositions, determine whether the orbit
	satisfies the formula.  The main contribution of the present paper is
	to show that the LTL Model-Checking Problem for discrete linear
	dynamical systems is decidable in dimension 3 or less.
\end{abstract}

\section{Introduction}
A \textit{discrete-time linear dynamical system} consists of a square
matrix $M \in\rationals^{d \times d}$ and a starting point $s \in
\rationals^d$. Its trajectory, the \emph{orbit} of $s$ under $M$,
is the infinite sequence $\langle s, Ms, M^2s, \ldots \rangle$.  Such
systems constitute a family of fundamental models, and decision
problems associated with their trajectories arise frequently in the
analysis of automata, Markov chains, linear recurrence sequences, and
linear while loops (see, e.g., \cite{ChonevOW16,TUCS05,
	kannan1986polynomial} and references therein).

One of the earliest decision problems for linear dynamical systems was
formulated by Harrison in 1969~\cite{harrison1969lectures}, and
subsequently baptised the \emph{``Orbit Problem''} by Kannan and Lipton,
who famously solved it a decade later~\cite{kannan1980orbit}. The
Orbit Problem asks, given a linear dynamical system $(M,s)$ in ambient
space $\reals^d$ together with a point target $t \in \rationals^d$,
whether the orbit of $s$ under $M$ reaches $t$. Kannan and Lipton
established polynomial-time decidability of the Orbit Problem in all
dimensions. In subsequent work~\cite{kannan1986polynomial}, Kannan and
Lipton speculated that more complex decision problems might also be
decidable; specifically, they considered variants of the Orbit Problem
in which the target $t$ is replaced by a linear subspace $T
\subseteq \reals^d$. They conjectured that for one-dimensional
subspaces, reachability might remain decidable, but in the same breath
they noted that when $T$ is a $(d-1)$-dimensional subspace of
$\reals^d$, the corresponding reachability problem is precisely
equivalent to the well-known \emph{Skolem Problem} asking whether a
linear recurrence sequence has a zero, which itself has been open for
many decades~\cite{TUCS05,tao2008structure} (although decidability is
known in dimension $4$ or less~\cite{MST84,Ver85}).

The problem of reaching a linear subspace was studied by
Chonev \emph{et al.}\ in~\cite{chonev2013orbit,ChonevOW16}, in which
the authors established decidability for subspaces of dimension up to
three (regardless of the dimension of the ambient space). Chonev \emph{et
	al.}\ then turned their attention to the \emph{Polyhedron-Hitting
	Problem}~\cite{chonev2015polyhedron}, in which the target is an
arbitrary polyhedron. Decidability in dimension $3$ was established,
but the authors showed that in dimensions $4$ or higher, solving the
Polyhedron-Hitting Problem would necessarily entail major
breakthroughs in Diophantine approximation, considered out of reach at
the present time. More recently, Almagor \emph{et al.}\ studied the
\emph{Semialgebraic Orbit Problem}, in which the target is an
arbitrary semialgebraic set~\cite{AOW19b}. Once again, decidability in
dimension $3$ was shown to hold.  Finally, in very recent work, and
building on earlier results~\cite{AOW17}, Almagor \emph{et
	al.}\ introduced a unifying framework for formulating
\emph{reachability} queries for discrete linear dynamical
systems~\cite{ShaullJournal}, subsuming all of the above
problems. Roughly speaking, the authors considered instances in which
both the source and target are semialgebraic sets, and a specification
formalism in which one may quantify over members of these
sets. Crucially, however, their \emph{First-Order Orbit Queries}
framework only allows \emph{reachability} queries (``is there a
positive integer $n$ such that, after $n$ steps, such and such
properties hold?''). Almagor \emph{et al.}\ established decidability
in dimension $3$.

\textbf{Main contributions.}  In this paper, going considerably beyond
reachability, we tackle the problem of full \emph{LTL model checking}
for orbits of discrete-time linear dynamical systems in dimension $3$
(or less). More precisely, we are given a linear dynamical system
$(M,s)$ (with a singleton starting point $s \in \rationals^3$),
together with a collection $T_1, \ldots, T_m \subseteq \reals^3$ of
semialgebraic sets, and an LTL formula $\varphi$ over atomic
propositions $P_1, \ldots, P_m$.  The atomic proposition
$P_i$ evaluates to \emph{true} at time $n$ if, and only if, the $n$-th
component of the orbit lies in $T_i$, i.e., $M^ns \in T_i$. Such a
framework enables one to formulate vastly more sophisticated and
complex properties of orbits than mere reachability. \textbf{Our main
	result is that the LTL Model-Checking Problem for discrete-time
	linear dynamical systems in three dimensions is decidable, with
	complexity in exponential space.}

Some remarks are in order:
\begin{enumerate}
	\item Since we have a single starting point, the orbit
	consists of a \emph{single} trajectory. The problem we are
	solving is sometimes referred to in the literature as
	\emph{``path checking''}, although typical applications in
	runtime verification and online monitoring involve finite
	traces, e.g., \cite{LS09}.  Path checking ultimately
	periodic infinite traces is considered in~\cite{MarkeyS03}, 
	but the traces arising from linear dynamical systems need
	not be ultimately periodic (see \cite{AAGT15}).
	
	\item Our framework is limited to dynamical systems in three (or
	fewer) dimensions. As mentioned earlier, it is known that mere
	polyhedral reachability is `hard' (in a Diophantine-approximation
	sense) in dimensions $4$ and above, and as LTL model checking with
	semialgebraic targets vastly generalises polyhedral reachability,
	one cannot reasonably expect to prove decidability in dimensions
	higher than $3$.
	
	\item Beyond the search for maximal versatility and generality, our
	use of semialgebraic sets---rather than, say, products of intervals
	or polyhedra---in our specification framework has a practical
	motivation: in application areas such as program analysis,
	semialgebraic sets are indispensable to formulate sufficiently
	expressive properties, whether one seeks to overapproximate a set of
	reachable states, or to synthesise invariants or barrier
	certificates; see, e.g., \cite{KCBR18}.
\end{enumerate}

On a technical level, our approach makes extensive use of spectral
techniques and relies on various tools from algebraic and
transcendental number theory, notably Baker's theorem on linear forms
in logarithms of algebraic numbers, as well as Kronecker's theorem in
Diophantine approximation.

In~\cite{AAGT15}, Agrawal \emph{et al.}\ consider a problem that is
closely related to ours, namely the approximate verification of the
symbolic dynamics of Markov chains. More specifically, they view a
Markov chain as a distribution transformer: a stochastic matrix $M$
and an initial probability distribution $s$ give rise to an orbit
$\langle s, Ms, M^2s, \ldots \rangle$. They further discretise the
probability space into finitely many boxes (products of intervals),
which give rise to atomic propositions in exactly the same manner as
in our setting. They then consider LTL model checking over
the resulting formalism, but observe that the set of infinite words
arising as symbolic trajectories of a given Markov chain can fail to
be $\omega$-regular; consequently, they switch their attention to
``$\epsilon$-approximations'' of the model-checking problem (the
precise definitions are technical) and are able to establish
decidability in \emph{all} dimensions.  This variant of the
model-checking problem does not allow to check a specific path and
thereby circumvents many of the difficulties arising in the present
paper.  The two pieces of work are therefore fairly distinct, both in terms of
their respective scope and in the mathematical approach taken, despite
sharing similar motivations.

\section{Mathematical background}
A semialgebraic set $T \subseteq \reals^n$ is defined by a Boolean
combination of polynomial inequalities of the form $p(x_1, \ldots,
x_n) \geq 0$ and $q(x_1, \ldots, x_n) > 0$ for polynomials $p, q \in
\mathbb{Z}[x_1, \ldots, x_n]$.  

\subsection{Algebraic numbers}
A complex number $\alpha$ is algebraic if it is a root of a polynomial $p$ with integer coefficients. We denote the set of algebraic numbers by $\algebraic$.
For an algebraic number $\alpha$, its defining polynomial $p_\alpha$ is the unique polynomial of the least degree that has $\alpha$ as a root and coefficients that do not share common factors. Given a polynomial $p \in \mathbb{Z}[x]$, let $\size{p}$ denote the bit length of its representation as a list of coefficients encoded in binary, $\deg(p)$ denote its degree and $H(p)$ denote its height (i.e. the maximum of the absolute value of coefficients of $p$). Throughout this work we make an extensive use of the facts that for each pair $\alpha$, $\beta$ of algebraic numbers, $\deg(\alpha \beta) \leq \deg(\alpha) + \deg(\beta)$ and $H(\alpha \beta) \leq H(\alpha) H(\beta)$. 

An algebraic number $\alpha$ can be represented using its defining polynomial $p_\alpha$ together with rational approximations of its real and imaginary parts to sufficient precision. More precisely, $\alpha$ can be represented by $(p_\alpha, a, b, r) \in \mathbb{Z}[x] \times \rationals^3$ provided that $\alpha$ is the unique root of $p_\alpha$ in the circle of radius $r$ around $a+bi$. A separation bound due to Mignotte \cite{mignotte1983some} asserts that for roots $\alpha \neq \beta$ of a polynomial $ p \in \mathbb{Z}[x]$,
\[
\length{\alpha-\beta} > \frac{\sqrt{6}}{d^{(d+1)/2}H^{d-1}}
\]
where $d, H$ are the degree and height of $p_\alpha$, respectively. Thus if $r$ is less than a quarter of the root separation bound, then the representation is well-defined. Given a polynomial $p \in \mathbb{Z}[x]$, we can compute a standard representation of each of its roots in time polynomial in $\size{p}$ \cite{basu2005algorithms}. Thus for an algebraic number $\alpha$, we denote by $\size{\alpha}$ the bit length of its standard representation.

Given representations of algebraic numbers $\alpha, \beta$ we can effectively compute representations for the algebraic numbers $\alpha + \beta$, $\alpha \beta$, $\frac{1}{\alpha}$, $\length{\alpha}$, $\operatorname{Re}(\alpha)$, $\operatorname{Im}(\alpha)$ in time polynomial in $\size{\alpha} + \size{\beta}$. Efficient algorithms for these tasks can be found in \cite{basu2005algorithms, cohen2013course}.

\subsection{Number-theoretic bounds}
Throughout the paper, we make an extensive use of the following lemma, which itself is a consequence of the celebrated Baker-Wüstholtz theorem.
\begin{lemma}[\cite{ouaknine2014ultimate}]
	There exists a constant $C$ such that for algebraic numbers $\alpha$, $\beta$, for every $n \geq 2$ if $\alpha^n \neq \beta$, then $\length{\alpha^n - \beta} \geq \frac{1}{n^{(\size{\alpha}+\size{\beta})^C}}.$
\end{lemma}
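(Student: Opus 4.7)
The plan is to reduce the statement to a direct application of the Baker--Wüstholtz theorem on linear forms in logarithms of algebraic numbers. First dispose of trivial sub-cases: if $\beta = 0$, then $\length{\alpha^n - \beta} = \length{\alpha^n}$ can be bounded below using the standard lower bound $\length{\gamma} \geq H(\gamma)^{-\deg(\gamma)}$ for a nonzero algebraic number $\gamma$, combined with $\deg(\alpha^n) \leq \deg(\alpha)$ and $H(\alpha^n) \leq H(\alpha)^n$, which yields the desired estimate. Cases such as $\alpha \in \{0,1\}$ or $\length{\alpha^n - \beta} \geq 1$ are similarly immediate. So I assume $\beta \neq 0$, $\alpha \notin \{0,1\}$, and $\length{\alpha^n - \beta} \leq 1/2$, and factor $\alpha^n - \beta = \beta\bigl(\alpha^n/\beta - 1\bigr)$.

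Next, choose the principal branch of the complex logarithm so that, with $z := \alpha^n/\beta$, the quantity $\length{\log z}$ is small. Then $\log z$ can be written as an integer linear combination
\[
\Lambda = n \log \alpha - \log \beta + 2\pi i k
\]
for some integer $k$ of magnitude $O(n)$, i.e., $\Lambda$ is a linear form in logarithms of the algebraic numbers $\alpha$, $\beta$ and $-1$ with integer coefficients bounded by $O(n)$. Since $\alpha^n \neq \beta$, we have $\Lambda \neq 0$. Baker--Wüstholtz then yields
\[
\log \length{\Lambda} \;\geq\; -C' \cdot \log H(\alpha)\, \log H(\beta) \, \log n,
\]
with the implicit constants depending polynomially on $\deg(\alpha)$ and $\deg(\beta)$. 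Since both $\log H(\alpha)$ and $\deg(\alpha)$ are at most $\size{\alpha}$ (and similarly for $\beta$), rearranging gives $\length{\Lambda} \geq n^{-(\size{\alpha}+\size{\beta})^{C_0}}$ for a suitable $C_0$.

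Finally, convert the bound on $\length{\log z}$ into one on $\length{z-1}$: the elementary estimate $\length{z-1} \geq \frac{1}{2}\length{\log z}$, which holds whenever $\length{z-1} \leq 1/2$, together with the standard lower bound $\length{\beta} \geq 2^{-\mathrm{poly}(\size{\beta})}$, gives $\length{\alpha^n - \beta} \geq n^{-(\size{\alpha}+\size{\beta})^C}$ for a sufficiently large constant $C \geq C_0$, as required.

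The main obstacle is bookkeeping rather than mathematics: tracking how the Baker--Wüstholtz bound, whose dependence on the heights, degrees, and integer coefficients of the linear form is somewhat delicate, collapses uniformly into the clean expression $n^{(\size{\alpha}+\size{\beta})^C}$. In particular, the coefficient $n$ must contribute only a $\log n$ factor inside the exponent of the Baker--Wüstholtz bound---a crucial feature of that theorem---so that after extracting the $\log n$ one recovers the polynomial-in-$(\size{\alpha}+\size{\beta})$ exponent of $n$ demanded by the statement. Handling the auxiliary $2\pi i k$ term (which forces the inclusion of $\log(-1)$ as a third logarithm and requires bounding $|k| = O(n)$) is the other notable subtlety.
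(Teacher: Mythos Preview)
The paper does not prove this lemma; it is quoted from~\cite{ouaknine2014ultimate} with the one-line remark that it is ``a consequence of the celebrated Baker--W\"ustholz theorem.'' Your sketch is precisely that derivation and is the standard route, so in substance your approach and the paper's (implicit) one coincide.

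One small caveat: your treatment of the degenerate sub-case $\beta = 0$ does not quite deliver the stated bound. The height inequality you invoke gives only $|\alpha^n| \geq H(\alpha^n)^{-\deg(\alpha^n)} \geq H(\alpha)^{-n\deg(\alpha)}$, which decays exponentially in $n$ when $|\alpha| < 1$ rather than like $1/n^{K}$ for fixed $K$. In every application in this paper one has $|\alpha| = 1$ (the lemma is invoked only with $\alpha = \gamma$ on the unit circle, where $|\alpha^n| = 1$ trivially), so the edge case is harmless here; it reflects an implicit hypothesis inherited from the cited source rather than a flaw in your main Baker--W\"ustholz argument, which is sound.
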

Lemma 2 below states the following: if we start at an arbitrary point $\gamma \in \torus$ on the unit circle, and repeatedly apply rotation through $\arg(\lambda)$ radians for $\lambda\in \torus \cap \algebraic$ that is not a root of unity, we will enter any open interval $J \subseteq \torus$ in at most a certain number of steps that does not depend on the starting point $\gamma$ but depends on the size of $J$. Henceforth we denote by $\length{J}$ the arc length of the interval $J \subseteq \torus$ in radians.
\begin{lemma}
	There exists a constant $D$ such that for every $\lambda \in \torus \cap \algebraic$ that is not a root of unity and open subinterval $J$ of $\torus$, for each $\gamma \in \torus$, $\gamma \lambda^n \in J$ for some $n < 2 \pi \left( \frac{2\pi}{\length{J}}
	\right)^{\size{\lambda}^D}$.
\end{lemma}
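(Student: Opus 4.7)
The plan is to combine a pigeonhole argument on the iterates of $\lambda$ with the Baker-type bound of Lemma~1. The idea is to first locate a small positive power $\lambda^k$ that sits very close to $1$, so that the sub-orbit $\gamma, \gamma\lambda^k, \gamma\lambda^{2k}, \ldots$ advances by a small (but, by Lemma~1, not too small) rotation at each step; such a sub-orbit must sweep out $\torus$ and enter $J$ within a controlled number of iterations.

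Concretely, set $N = \lceil 4\pi/\length{J}\rceil$ and partition $\torus$ into $N$ equal arcs, each of length at most $\length{J}/2$. Among the $N+1$ iterates $\lambda^0, \lambda^1, \ldots, \lambda^N$, pigeonhole yields indices $0 \leq i < j \leq N$ with $\lambda^i, \lambda^j$ in a common sub-arc, so $k := j - i$ satisfies $1 \leq k \leq N$ and $|\arg(\lambda^k)| \leq 2\pi/N \leq \length{J}/2$. Since $\lambda$ is not a root of unity, $\lambda^k \neq 1$, so Lemma~1 (applied with $\alpha = \lambda$ and $\beta = 1$, absorbing the fixed size of $\beta$ into the exponent) supplies a constant $C'$ with $\length{\lambda^k - 1} \geq k^{-\size{\lambda}^{C'}}$. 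Combined with the elementary inequality $|\arg(\lambda^k)| \geq \length{\lambda^k - 1}$, this provides a matching lower bound on the rotation angle of $\lambda^k$.

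Now the sub-orbit $\gamma\lambda^{mk}$ for $m = 0, 1, 2, \ldots$ rotates by $\arg(\lambda^k)$ at each step; since $|\arg(\lambda^k)| \leq \length{J}/2$, consecutive terms cannot skip over $J$, and so the sub-orbit must land in $J$ within $M = \lceil 2\pi/|\arg(\lambda^k)|\rceil$ iterations. Hence $\gamma\lambda^n \in J$ for some $n = mk$ with
\[
n \leq Mk \leq 4\pi\, k^{1+\size{\lambda}^{C'}} \leq 4\pi\, N^{1+\size{\lambda}^{C'}} \leq 4\pi \bigl(5\pi/\length{J}\bigr)^{1+\size{\lambda}^{C'}}.
\]
Choosing $D$ large enough that $\size{\lambda}^D$ absorbs the additive $+1$ and the multiplicative constants $4\pi$, $5\pi$ (routine, since $\size{\lambda}$ is bounded below by an absolute constant) yields the target bound $n < 2\pi(2\pi/\length{J})^{\size{\lambda}^D}$.

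All of the Diophantine content is concentrated in the invocation of Lemma~1; the rest is elementary pigeonhole and bookkeeping. The only subtle obstacle is checking that the final constant $D$ can indeed be chosen uniformly, i.e.\ independently of $\lambda$, $\gamma$, and $J$, which in turn reduces to the fact that $C'$ depends only on the universal constant $C$ from Lemma~1.
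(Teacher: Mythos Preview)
Your proposal is correct and follows essentially the same argument as the paper: pigeonhole on the first $O(2\pi/\length{J})$ iterates of $\lambda$ to find a power $\lambda^k$ whose argument is smaller than $\length{J}$, invoke Lemma~1 to bound $|\lambda^k-1|$ from below, and then observe that the arithmetic progression $\gamma\lambda^{mk}$ sweeps $\torus$ without skipping $J$. The only differences are cosmetic (you partition $\torus$ explicitly and take step size $\leq \length{J}/2$ rather than $<\length{J}$, and your final constant-absorption is slightly looser than the paper's, which arranges the intermediate bound to already have the factors $2\pi$ in place); neither affects the substance.
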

\begin{proof}
	By the Pigeonhole principle, if there are $N_1 > \frac{2\pi}{\length{J}}$ points on the unit circle, at least two of them will have arc distance smaller than $\length{J}$. We select $N_1 = 
	\left\lfloor \frac{2\pi}{\length{J}}
	\right\rfloor + 1 > \frac{2\pi}{\length{J}}$
	and consider the sequence $\langle \lambda, \lambda^2, \ldots \rangle$. By the preceding argument, there exist 
	$1 \leq  k < m  \leq N_1$ such that $\lambda^k$ and $\lambda^m$ have arc distance smaller than $\length{J}$. That is, $\arg(\lambda^{m-k}) < \length{J}$.
	
	Bounding arc length from below with Euclidean distance and using Lemma 1,
	\[
	\arg(\lambda^{m-k}) \geq
	\length{\lambda^m - \lambda^k} =
	\length{\lambda^{m-k}-1} 
	\geq 
	\frac{1}{(m-k)^{(\size{\lambda} + \size{1})^C}}
	\geq
	\frac{1}{\left\lfloor \frac{2\pi}{\length{J}}
		\right\rfloor^{(\size{\lambda} + \size{1})^C}}.
	\]
	Now consider the sequence $z_i = \gamma \lambda^{i(m-k)}$ for $i \geq 0$. As the arc distance between any consecutive terms $z_i, z_{i+1}$ is less than $\length{J}$, this sequence must enter $J$ before winding around the unit circle once. We therefore obtain that for some 
	\[
	n_1 \leq 
	\left\lfloor 
	\frac{2\pi - \length{J}}{\arg(\lambda^{m-k})}
	\right\rfloor + 1
	\leq
	\left\lfloor 
	(2\pi - \length{J}) 
	\left\lfloor \frac{2\pi}{\length{J}}
	\right\rfloor^{(\size{\lambda} + \size{1})^C}
	\right\rfloor + 1
	\leq
	2\pi 
	\left(
	\frac{2\pi}{\length{J}}
	\right)^{(\size{\lambda} + \size{1})^C}
	\]
	$z_{n_1} \in J$. Translating this back to the sequence $\langle \lambda, \lambda^2, \ldots \rangle$ we have that $\gamma \lambda^n \in J$ for some
	\[
	n \leq \frac{2\pi}{\length{J}}
	\cdot 2 \pi
	\left(
	\frac{2\pi}{\length{J}}
	\right)^{(\size{\lambda} + \size{1})^C}
	=
	2 \pi
	\left(
	\frac{2\pi}{\length{J}}
	\right)^{(\size{\lambda} + \size{1})^C+1}.
	\]
	Finally, choosing $D$ such that $(\size{\lambda} + \size{1})^C+1 \leq \size{\lambda}^D$ yields the desired result.
\end{proof}

\section{The LTL Model-Checking Problem}

Suppose we are given a matrix $M \in \rationals^{3 \times 3}$, a point $s \in \rationals^3$ and an LTL formula $\varphi$ over semialgebraic predicates $T_1, \ldots, T_m \subseteq \reals^3$ as an input. 
We associate with each $T_i$ an atomic proposition $P_i$ which evaluates to \emph{true} at time $n$ in case $M^ns \in T_i$. 
Hence we associate an $\omega$-word $w$ over $2^{\{P_1, \ldots, P_m\}}$ with the orbit $\langle s, Ms, M^2s, \ldots \rangle$ in the standard manner. The LTL Model-Checking Problem is then to decide whether $w \models \varphi$. 

We assume that $\varphi$ is given in the form
\[
\varphi := T \; | \; \varphi \land \varphi \; | \; \varphi \lor \varphi \; | \; \varphi \Until \varphi \; | \; \varphi \Release \varphi \; | \; \Next \varphi
\]
where $T$ is an atomic semialgebraic set described via a single polynomial inequality of the form $p(x) > 0$ or $p(x) \geq 0$. Observe that $\varphi$ does not contain the $\lnot$ operator: an arbitrary formula $\psi$ over semialgebraic sets (defined by a Boolean combination of inequalities of the type $p(x) > 0$ or $p(x) \geq 0$) can be translated into an equivalent formula in this form by first translating $\psi$ into negation-normal form and then replacing $\lnot p(x) \geq 0$ with $-p(x) > 0$ and $\lnot p(x) > 0$ with $-p(x) \geq 0$. This translation incurs at most a linear blowup in size.

Throughout the paper we assume that the polynomials $p$ defining the atomic predicates are given as a list of coefficients including (possibly many) zeros. Hence it is always the case that $\size{p} \geq \deg{(p)}$. 

We analyse the problem based on the eigenvalues of $M$. Our main
result is the following. 
\begin{theorem}
	Given $M \in \rationals^{3 \times 3}$, $s \in \rationals^3$ and $\varphi$, the LTL Model-Checking Problem is decidable in $\mathop{\mathbf{EXPSPACE}}$ in $\size{M} + \size{s} + \size{\varphi}$.
\end{theorem}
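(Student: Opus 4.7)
My plan is to reduce LTL model checking to an automaton-theoretic problem by first producing, for each atomic proposition $P_i$, an effectively representable description of the set $\{n \in \mathbb{N} : M^n s \in T_i\}$, and then combining these via a product B\"uchi automaton with an automaton for $\varphi$. I begin with a case analysis on the spectrum of $M \in \rationals^{3\times 3}$: either all three eigenvalues are real (with possible multiplicities and Jordan blocks), or there is one real eigenvalue together with a complex-conjugate pair $\lambda,\overline{\lambda}$. Using the Jordan decomposition, $M^n s$ is a vector-valued exponential polynomial in $n$, so substituting into the polynomial $p_i$ defining each $T_i$ yields a scalar exponential polynomial $f_i(n)$ whose sign determines the truth of $P_i$ at time $n$.

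I would then analyse the asymptotic behaviour of each $f_i$ by dividing through by its dominant modulus. If a single non-oscillatory term dominates, then the sign of $f_i(n)$ is eventually constant, and Lemma 1 supplies an explicit threshold beyond which this stabilisation holds. The interesting case is when the dominant contribution oscillates, arising from a term $c\lambda^n + \overline{c}\,\overline{\lambda}^n$ with $\lambda \in \torus$, possibly added to a smaller non-oscillatory term. Here the sign of $f_i(n)$ is, for sufficiently large $n$, determined by whether $\lambda^n$ lies in an explicitly computable finite union of open arcs $J_{i,1},\ldots,J_{i,k_i} \subseteq \torus$; Lemma 2 provides a single-exponential upper bound on how long the rotation $n \mapsto \lambda^n$ can avoid each such arc, and the constant $C$ of Lemma 1 quantifies how close the trajectory can approach the arc boundaries.

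The main obstacle is that when $\lambda$ is not a root of unity, the $\omega$-word of predicate valuations is typically not ultimately periodic, so standard path-checking techniques do not apply directly. My plan is to view the tail of the trace as the symbolic coding of the irrational rotation $n \mapsto \lambda^n$ under the semialgebraic partition of $\torus$ induced by the collected arcs $J_{i,j}$. Translating $\varphi$ into a B\"uchi automaton $\mathcal{A}_\varphi$, acceptance along the coded trace reduces, via a standard strongly-connected-component analysis of the product of $\mathcal{A}_\varphi$ with the symbolic-coding structure, to checking whether each candidate accepting component admits a sequence of arc visits realisable by the rotation; Kronecker's theorem (density of $\{\lambda^n\}$ in $\torus$) together with Lemma 2 makes this a decidable non-emptiness question about intersections of semialgebraically-defined arcs. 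A careful size analysis---polynomially many arcs of at-worst exponentially small length, single-exponential recurrence times, and algebraic numbers of polynomial bit-length in $\size{M}+\size{s}+\size{\varphi}$---together with the exponential blow-up incurred by the LTL-to-B\"uchi translation, yields the stated $\mathop{\mathbf{EXPSPACE}}$ bound.
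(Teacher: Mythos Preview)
Your proposal takes a genuinely different route from the paper. The paper never builds a B\"uchi automaton for $\varphi$; instead it proves directly, by structural induction on $\varphi$ (Theorem~7), that for all sufficiently large $n$ the truth of $\langle M^n s, M^{n+1}s,\ldots\rangle \models \varphi$ is determined by whether $\gamma^n$ lies in a computable finite union $J_\varphi$ of open arcs of $\torus$. A quantitative analysis of this construction (Lemmas~8--9, Theorem~10) shows that the minimum arc length in $J_\varphi$ is at least the reciprocal of a doubly-exponential quantity, which in turn bounds the \emph{return time} of every subformula (Theorem~11). The algorithm then replaces each $\Until$ and $\Release$ by a bounded variant with a doubly-exponential horizon and evaluates the resulting finitary formula by direct recursion, using an oracle for ``$M^n s \in T_i$?'' implemented in the existential theory of the reals; this gives the $\mathop{\mathbf{EXPSPACE}}$ bound.

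The gap in your plan is the step ``acceptance along the coded trace reduces, via a standard SCC analysis of the product of $\mathcal{A}_\varphi$ with the symbolic-coding structure, to a non-emptiness question about intersections of arcs.'' SCC analysis decides \emph{non-emptiness}; here you must decide acceptance of one \emph{specific} $\omega$-word, namely the coding of the irrational rotation from the fixed initial point. Abstracting the rotation to a finite arc-transition graph is too coarse: from any point the rotation deterministically selects a single successor, whereas the graph allows any arc meeting the rotated arc, so the product admits many spurious traces and its accepting SCCs say nothing about the actual coding sequence. Kronecker density does not help either---it guarantees every arc is visited infinitely often, but not which \emph{ordered sequences} of arcs occur along the orbit, and deciding whether the specific aperiodic coding word threads an accepting run through $\mathcal{A}_\varphi$ is exactly the problem you started with. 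To make an automaton approach work you would have to show that for each state $q$ of (a determinisation of) $\mathcal{A}_\varphi$ the set $\{z\in\torus : \text{the run from }(q,z)\text{ is accepting}\}$ is a finite union of arcs whose lengths you can bound below; that is precisely the content of the paper's Theorems~7 and~10, and establishing it requires the inductive construction and return-time analysis your sketch omits. Relatedly, your size claim of ``exponentially small arcs'' and ``single-exponential recurrence times'' holds only at the atomic level: once temporal nesting is accounted for the relevant arcs become doubly-exponentially small (Theorem~10), and the $\mathop{\mathbf{EXPSPACE}}$ bound survives only because counting up to a doubly-exponential horizon costs single-exponential space.
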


\section{When not all three eigenvalues are real}
We first consider the case in which $M$ has complex eigenvalues $\lambda$, $\overline{\lambda}$ and a real eigenvalue $\rho$.
Moreover, we assume that $\lambda^k \notin \reals $ for all $k \in \naturals$, i.e. $\gamma = \frac{\lambda}{\length{\lambda}}$ is not a root of unity. This case requires by far the most detailed analysis. However, the final model-checking algorithm is quite simple in that it does not involve any non-trivial manipulations of algebraic numbers or semialgebraic sets.

\subsection{Preliminary analysis}

In this section we introduce normalised expressions in order to study the set of all values of $n$ for which the term $M^n s$ of the orbit is in an atomic semialgebraic set $T$. The treatment here mostly mirrors that in \cite{ShaullJournal}.

Since $M$ is assumed to have three distinct eigenvalues, we can diagonalise $M = PDP^{-1}$ where
$ 
D = 
\begin{bmatrix}
\lambda & 0 & 0 \\
0 & \overline{\lambda} & 0 \\
0 & 0 & \rho
\end{bmatrix}
$.
Observe that $M^n s = P D^n P^{-1} s$ and $P, P^{-1}$ contain algebraic entries of constant degree and height polynomial in $\size{M}$. Hence we can write
\[
M^n s = 
\begin{bmatrix}
a_1 \lambda^n +  \overline{a_1} \overline{\lambda}^n + c_1 \rho^n\\
a_2 \lambda^n +  \overline{a_2} \overline{\lambda}^n + c_2 \rho^n \\
a_3 \lambda^n +  \overline{a_3} \overline{\lambda}^n + c_3 \rho^n
\end{bmatrix}
\]
where $a_1, a_2, a_3$ and $c_1, c_2, c_3$ are all algebraic numbers with fixed degree and description length polynomial in $\size{M} + \size{s}$. 

Let $T = \{x \in \reals^3 : p(x) \sim 0\}$, ${\sim} \in \{>, \geq\}$ be an atomic semialgebraic set. We study the set $\Z{T} = \{n\geq0 : M^ns \in T\} = \{n \geq0 : p(M^ns) \sim 0\}$. In Appendix C we show that $p(M^n s)$ can be written as  
\begin{equation}
\sum_{0 \leq p_1, p_2, p_3 \leq deg(p)} 
\alpha_{p_1, p_2, p_3} \lambda^{np_1} \overline{\lambda}^{np_2} \rho^{np_3} 
+
\overline{\alpha_{p_1, p_2, p_3}} \lambda^{np_1} \overline{\lambda}^{np_2} \rho^{np_3}
\end{equation}
where each $\alpha_{p_1, p_2, p_3}$ is an algebraic number of degree polynomial and height exponential in $\size{M} + \size{s} + \size{p}$.
If all the coefficients $\alpha_{p_1, p_2, p_3}$ above are 0 then $p(M^ns) = 0$ for all $n$, and hence the orbit is either always or never in the semialgebraic set $T$. In this case, we can replace $T$ in any LTL formula with $\mathbf{\mathop{true}}$ or $\mathbf{\mathop{false}}$. Otherwise, let $\Lambda = \max \{\length{\lambda^{p_1} \overline{\lambda}^{p_2} \rho^{p_3}}  : \alpha_{p_1, p_2, p_3} \neq 0 \}$.
Dividing the expression in (1) by $\Lambda$ we obtain that $p(M^n s) \sim 0$ if and only if
\[
\sum_{m=0}^{k} \beta_m \gamma^{nm} + \overline{\beta_m} \overline{\gamma}^{nm} + r(n) \sim 0
\]
where
\begin{itemize}
	\item $\gamma = \frac{\lambda}{\length{\lambda}}$ with degree at most 12 (Appendix~B) and height polynomial in $\size{M}$,
	\item $k \leq \deg(p)$,
	\item $r(n) = \sum_{l=1}^{k'} \chi_l \mu_l^{n} + \overline{\chi_l} \overline{\mu_l}^{n}$
	with $\length{\mu_l} < 1$ for every $1 \leq l \leq k'$,
	\item and all the coefficients and exponents $\beta_m$, $0 \leq m \leq k$ and $\chi_l$, $\mu_l$, $1 \leq l \leq k'$ are algebraic.
\end{itemize}

We refer to $e(n) = \sum_{m=0}^{k} \beta_m \gamma^{nm} + \overline{\beta_m} \overline{\gamma}^{nm} + r(n)$ as the \textit{normalised expression corresponding to $T$}, and denote the bit-length of its syntactic representation by $\size{e}$, which can again be shown to be of size polynomial in  $\size{M} + \size{s} + \size{p}$.
%as each $\beta_m$ is of the form $\frac{\alpha_{p_1, p_2, p_3}}{\Lambda}$.  

\subsection{On visiting atomic semialgebraic sets}
Recall that we defined, for an atomic semialgebraic set $T$, a matrix $M$ and a starting point $s$, $\Z{T} = \{n\geq0 : M^n s \in T\}$. We now study the structure of $\mathcal{Z}(T)$ and show how to compute a useful finite representation for it.

In this section, let $\size{\mathcal{I}_T} = \size{M} + \size{s} + \size{p}$, where $p$ is the polynomial defining $T$.	
Let $e(n) = \sum_{m=0}^{k} \beta_m \gamma^{nm} + \overline{\beta_m} \overline{\gamma}^{nm} + r(n)$ be the normalised expression corresponding to $T$. We call the function $f(z) = \sum_{m=0}^{k} \beta_m z^m + \overline{\beta_m} \overline{z}^m$, $f : \mathbb{C} \rightarrow \reals$ the \textit{dominant function corresponding to $T$}. Observe that $e(n) = f(\gamma^n) + r(n)$.

From \cite{ShaullJournal} we know that $f$ has at most $4k$ zeros in the unit circle $\torus$, which are algebraic numbers with description length polynomial in $\size{f}$, and that
\begin{lemma}
	There exists $N = {2^{\size{e}}} ^ {O(1)}$  such that for all $n > N$,
	\begin{itemize}
		\item $f(\gamma^n) \neq 0$, and
		\item $f(\gamma^n) > 0$ iff $f(\gamma^n) + r(n) > 0$ iff $M^n s \in T$.
	\end{itemize}
\end{lemma}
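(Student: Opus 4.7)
The plan is to show that $\length{f(\gamma^n)}$ eventually admits a lower bound that decays only inverse-polynomially in $n$, while $\length{r(n)}$ decays exponentially in $n$; comparing these two estimates will force $e(n) = f(\gamma^n) + r(n)$ to inherit the sign of $f(\gamma^n)$, and in particular to be nonzero, for all sufficiently large $n$.

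First, I would convert $f$ into an honest polynomial in order to pin down its zeros. Multiplying by $z^k$ (which has modulus $1$ on $\torus$) yields $g(z) = z^k f(z)$, a complex polynomial of degree at most $2k$, so $\length{f(z)} = \length{g(z)}$ whenever $z \in \torus$. Factoring $g$ and separating its roots into those on $\torus$---namely the zeros $\zeta_1, \dots, \zeta_L$ of $f$, with multiplicities $m_j$---and those off $\torus$, the latter contribute factors bounded below by a positive constant uniformly in $z \in \torus$, by Mignotte-style separation applied to the algebraic coefficients of $g$. This yields the effective bound
\[
\length{f(z)} \;\geq\; c \prod_{j=1}^{L} \length{z - \zeta_j}^{m_j}, \qquad z \in \torus,
\]
with $c \geq 2^{-\size{e}^{O(1)}}$ effectively computable.

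Next, for each $\zeta_j$ there is at most one $n$ with $\gamma^n = \zeta_j$, since $\gamma$ is not a root of unity. I would bound the largest such index by $N_1 = 2^{\size{e}^{O(1)}}$, either by invoking Kannan and Lipton's algorithm for the orbit problem on the two-dimensional real embedding of multiplication by $\gamma$, or by a direct Baker-type argument on the algebraic pair $\gamma, \zeta_j$. For $n > N_1$, Lemma 1 then applies to every pair $(\gamma, \zeta_j)$, yielding $\length{\gamma^n - \zeta_j} \geq n^{-K_j}$ for exponents $K_j$ polynomial in $\size{e}$. Substituting into the product bound gives
\[
\length{f(\gamma^n)} \;\geq\; \frac{c}{n^{K}}, \qquad K = \sum_{j} m_j K_j = \size{e}^{O(1)}.
\]

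Meanwhile, setting $\mu = \max_l \length{\mu_l} < 1$, I would bound $\length{r(n)} \leq 2^{\size{e}^{O(1)}} \mu^n$ and use that $\mu$ is algebraic to obtain $1 - \mu \geq 2^{-\size{e}^{O(1)}}$, whence $\length{\log \mu}^{-1} \leq 2^{\size{e}^{O(1)}}$. A routine logarithmic comparison then produces $N_2 = 2^{\size{e}^{O(1)}}$ such that $\length{r(n)} < c/n^{K}$ for all $n > N_2$. Setting $N = \max(N_1, N_2)$, for $n > N$ we get $f(\gamma^n) \neq 0$ and $\operatorname{sgn}(e(n)) = \operatorname{sgn}(f(\gamma^n))$; since $e(n)$ equals $p(M^n s)$ up to the positive rescaling by $\Lambda$, the third equivalence of the lemma follows at once.

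The main technical obstacle will be the quantitative control of the constant $c$: this requires tracking how far the off-$\torus$ roots of $g$ stay from $\torus$, via root-separation estimates on the coefficients of $g$, which are algebraic of degree polynomial and height exponential in $\size{e}$, as well as the analogous control of the leading coefficient. Once $c$ is under control, the Baker-type Lemma 1 delivers the polynomial decay of $\length{f(\gamma^n)}$, after which the comparison with the exponentially small $r(n)$ is standard.
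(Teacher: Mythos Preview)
The paper does not prove this lemma at all: it is quoted directly from~\cite{ShaullJournal}, so there is no in-paper argument to compare against. Your sketch is exactly the standard route used in that literature---reduce $f$ on $\torus$ to a genuine polynomial $g(z)=z^k f(z)$, isolate the on-circle zeros $\zeta_j$, lower-bound $\length{\gamma^n-\zeta_j}$ via the Baker-type Lemma~1, and then beat the exponentially decaying $\length{r(n)}$---and it is correct.

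Two small remarks. First, your bound on $N_1$ (the largest $n$ with $\gamma^n=\zeta_j$) is the one step that is not immediate from the tools already on the table: Lemma~1 only tells you $\length{\gamma^n-\zeta_j}$ is not too small \emph{when} it is nonzero, and Kannan--Lipton gives decidability rather than an explicit exponent bound in this unit-modulus setting where heights do not grow along the orbit. The clean way to close this is to invoke an effective bound on multiplicative relations among algebraic numbers (Masser/Loxton--van der Poorten style): if $\gamma^n\zeta_j^{-1}=1$ is a nontrivial relation, then $n$ is bounded polynomially in the degrees and log-heights of $\gamma,\zeta_j$, which here are $\size{e}^{O(1)}$. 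Second, for the constant $c$ you are right that the only delicate ingredients are a height-based lower bound on the leading coefficient of $g$ and a lower bound on $\length{\,\length{\xi}-1\,}$ for each off-circle root $\xi$; both follow from Mignotte-type separation applied to the (algebraic) coefficients of $g$, exactly as you outline.
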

Since $f$ is a continuous real-valued function, it maintains its sign between its (at most $4k$) roots on the unit circle. Recalling that $\size{e} = \size{\mathcal{I}_T}^{O(1)}$, we rephrase Lemma 4 as follows:
\begin{theorem}
	Let $T$ be an atomic semialgebraic set. There exist $N = {2^{\size{\mathcal{I}_T}}} ^ {O(1)}$ and $J \subseteq \torus$ that is a union of finitely many open intervals such that for $n > N$, $M^ns \in T$ if and only if $\gamma^n \in J$. 
\end{theorem}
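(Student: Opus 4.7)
The plan is to recognise that Theorem~5 is essentially a geometric repackaging of Lemma~4, with $J$ defined as the positivity locus of the dominant function $f$ on $\torus$.

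First, I would extract $N$ directly from Lemma~4 applied to the normalised expression $e$ corresponding to $T$. Lemma~4 supplies $N = 2^{\size{e}^{O(1)}}$; since the preliminary analysis in Section~4.1 establishes that $\size{e}$ is polynomial in $\size{\mathcal{I}_T} = \size{M} + \size{s} + \size{p}$, this translates to $N = 2^{\size{\mathcal{I}_T}^{O(1)}}$, matching the statement. The only bookkeeping is to observe that composing a polynomial blow-up with the single $2^{(\cdot)^{O(1)}}$ layer preserves the exponent class.

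Next, I would set $J := \{z \in \torus : f(z) > 0\}$ and argue it is a finite union of open arcs. Because $f : \torus \to \reals$ is continuous on the circle and, by the result cited from \cite{ShaullJournal}, has at most $4k$ roots on $\torus$, removing these roots partitions $\torus$ into at most $4k$ relatively open arcs. By the intermediate value theorem, $f$ maintains a constant sign on each arc, so $J$ is a union of at most $4k$ open intervals. For $n > N$, Lemma~4 gives that $M^n s \in T$ is equivalent to $f(\gamma^n) > 0$, and by construction the latter is equivalent to $\gamma^n \in J$. The distinction between the strict and non-strict forms of the atomic predicate $p(x) \sim 0$ is handled uniformly: since Lemma~4 also guarantees $f(\gamma^n) \neq 0$ for $n > N$, no boundary case ever arises.

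There is no serious obstacle at this stage: the technically demanding work---separating the sign of $f(\gamma^n)$ from the exponentially decaying tail $r(n)$ via Baker-type estimates---has already been performed inside Lemma~4, and what remains is a continuity argument on $\torus$ together with a size-tracking step. Once $J$ is defined set-theoretically, one could additionally note that since the zeros of $f$ are algebraic numbers of polynomially-bounded description length, the endpoints of the arcs making up $J$ are computable, which will matter for the downstream model-checking algorithm but is not needed for the existential statement of Theorem~5 itself.
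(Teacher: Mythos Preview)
Your proposal is correct and matches the paper's approach essentially verbatim: the paper also presents Theorem~5 as a direct ``rephrasing'' of Lemma~4, invoking the continuity of $f$ and its at most $4k$ roots on $\torus$ to obtain the sign-constant arcs, together with the observation that $\size{e} = \size{\mathcal{I}_T}^{O(1)}$ to transfer the bound on $N$. Your additional remark on the strict/non-strict case and on the computability of the endpoints goes slightly beyond what the paper spells out at this point but is accurate.
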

Such $J$ can be written uniquely as a disjoint union of open arcs in $\torus$.  We refer to such intervals as the \emph{component intervals} or \emph{components} of $J$.
Observe that the endpoints of components of $J$ are roots of $f$. Recall that $\gamma$ is not a root of unity, and hence by Kronecker's theorem, the sequence $(\gamma^i)_{i \in \naturals}$ is dense in $\torus$. It follows that the sequence $(\gamma^{N+i})_{i \in \naturals}$ is likewise dense in $\torus$, and we obtain that $J$ is unique, in the sense of being independent from any $N$ that satisfies the conclusion of Theorem 5. Hence we refer to $J$ as the \emph{finite union of (open) intervals corresponding to $T$}.

\subsection{$\mathcal{Z}(\varphi)$ for general $\varphi$}
We now study the set $\mathcal{Z}(\varphi) =\{n \geq 0 : \langle M^ns, M^{n+1}s, \ldots \rangle \models \varphi\}$, i.e. the set of all suffixes of the original orbit $\langle s, Ms, M^2s, \ldots \rangle$ that satisfy $\varphi$, for an arbitrary LTL formula $\varphi$. We extend Theorem 5 by showing that $\mathcal{Z}(\varphi)$ also has a corresponding union of finitely many open intervals that can be effectively computed from the finite unions of open intervals corresponding to its subformulas.

In this section, let $\Ib = \size{M}+\size{s} + \sum_{i=1}^m
\size{T_i}$ where $T_1, T_2, \ldots, T_m$ are atomic predicates appearing in some formula $\varphi$. Intuitively, $\left\Vert \mathcal{I}_b \right\Vert$ is the ``basic length'' of the input that doesn't account for the structure of $\varphi$. Our main result is the following:

\begin{theorem}
	Let $\varphi$ be an LTL formula. There exists $N = {2^{\size{\mathcal{I}_b}}}^{O(1)}$ and a finite union of open intervals $J_\varphi \subseteq \torus$ such that for all $n > N$, $n \in \mathcal{Z}(\varphi)$ if and only if $\gamma^n \in J_\varphi$.
\end{theorem}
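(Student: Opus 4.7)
The plan is structural induction on $\varphi$. The atomic case is precisely Theorem 5, while the Boolean and $\Next$ connectives are straightforward: set $J_{\varphi_1 \wedge \varphi_2} = J_{\varphi_1} \cap J_{\varphi_2}$, $J_{\varphi_1 \vee \varphi_2} = J_{\varphi_1} \cup J_{\varphi_2}$, and $J_{\Next \varphi_1} = \gamma^{-1} J_{\varphi_1}$; each preserves the finite-union-of-open-arcs property and leaves the bound $N$ essentially unchanged (increased by at most $1$ in the $\Next$ case).

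The heart of the argument is the case $\psi = \varphi_1 \Until \varphi_2$. Writing $J_1 = J_{\varphi_1}$ and $J_2 = J_{\varphi_2}$, for $n$ past both inductive thresholds the Until semantics becomes the statement that the orbit $\gamma^n, \gamma^{n+1}, \ldots$ enters $J_2$ while lying in $J_1$ at every earlier step. This depends only on $z = \gamma^n$ (not on $n$ itself), and yields the least-fixpoint characterisation
\[
J_\psi \;=\; J_2 \cup \bigl(J_1 \cap \gamma^{-1}(J_\psi)\bigr),
\]
which I would compute by iterating $A_0 := J_2$ and $A_{m+1} := J_2 \cup (J_1 \cap \gamma^{-1}(A_m))$. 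Each $A_m$ is a finite union of open arcs since that class is closed under intersection, union and rotation by $\gamma^{-1}$. To cut the iteration off I apply Lemma 2 to the open set $U = J_2 \cup (\torus \setminus \overline{J_1})$: every $\gamma$-orbit enters $U$ within $K = {2^{\size{\mathcal{I}_b}}}^{O(1)}$ steps, assuming the shortest component of $U$ has arc length at least $1/{2^{\size{\mathcal{I}_b}}}^{O(1)}$. Once an orbit exits $J_1$ it cannot thereafter be newly added to any $A_m$, so the sequence stabilises at $A_K$; set $J_\psi := A_K$ and $N_\psi := \max(N_{\varphi_1},N_{\varphi_2}) + K$. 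The $\Release$ case is handled dually as the greatest fixpoint of $J_\psi = J_2 \cap (J_1 \cup \gamma^{-1}(J_\psi))$, computed by a decreasing iteration cut off at the analogous $K$ via Lemma 2 on $J_1 \cup (\torus \setminus \overline{J_2})$. Degenerate situations ($J_2 = \emptyset$, $J_1 = \torus$, etc.) are resolved directly, and the finitely many $n$ at which $\gamma^n$ could coincide with an endpoint of $J_1 \cup J_2$ are absorbed into $N$.

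The main obstacle is the quantitative control needed to keep this a \emph{single}-exponential bound throughout the induction: one must verify that the shortest component of $U$ remains at least $1/{2^{\size{\mathcal{I}_b}}}^{O(1)}$ at every step. The endpoints of every $J_\psi$ produced are of the form $\gamma^{-m} e$ with $e$ an endpoint of some atomic $J_{T_i}$ (whose descriptions are polynomial in $\size{\mathcal{I}_b}$ by Theorem 5) and $m$ an integer bounded by an accumulated single-exponential $K$. Lemma 1 then lower-bounds the distance between any two such points by a reciprocal polynomial in $m$, which, together with the single-exponential bound on $m$, keeps the minimum component of $U$ at the required level and prevents the per-step $K$ from compounding into a tower of exponentials as formula depth grows.
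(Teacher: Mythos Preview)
Your construction of $J_\varphi$ for each connective is essentially the paper's: your fixpoint $A_K$ unfolds to exactly the finite union $\bigcup_{\Delta=0}^{K}\bigl(\gamma^{-\Delta}J_{\varphi_2}\cap\bigcap_{m<\Delta}\gamma^{-m}J_{\varphi_1}\bigr)$ that the paper writes down directly in Theorem~7. The genuine difference---and the gap---lies in how the threshold $N$ is handled.

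The paper's key observation is that $N$ \emph{never needs to grow}. Once $N=\max_i N_i$ is fixed from the atomic predicates (Theorem~5), every inductive step only looks \emph{forward} in time: for $n>N$ one also has $n+k>N$ for every $k\ge 0$, so the inductive hypothesis applies at $n+1,n+2,\ldots$ without any adjustment. Hence the $\Next$, $\Until$ and $\Release$ cases all preserve the same $N$, and the bound $N={2^{\Ib}}^{O(1)}$ is immediate, uniformly in $\varphi$.

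You instead set $N_\psi:=\max(N_{\varphi_1},N_{\varphi_2})+K$ (and already $N+1$ for $\Next$), and then argue in the last paragraph that the accumulated $K$'s stay single-exponential. That argument is circular and in fact gives the wrong exponent. The minimal arc length is bounded below, via Lemma~1, only as $d(\varphi)\ge 1/(R(\varphi)+2)^{\Ib^{O(1)}}$, while the retraction depth satisfies $R(\varphi)\lesssim (1/d(\varphi'))^{\Ib^{O(1)}}$ for the relevant subformulas $\varphi'$. Iterating this recursion over the temporal depth $D(\varphi)$ yields $d(\varphi)\ge 1/2^{2^{(\Ib+D(\varphi))^{O(1)}}}$, which is precisely the \emph{doubly} exponential estimate the paper later proves as Theorem~10. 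Your accumulated $N_\varphi$ would therefore be doubly exponential in $\I$ and would depend on the structure of $\varphi$, so the stated bound $N={2^{\Ib}}^{O(1)}$ is not established. The remedy is simply to drop the ``$+K$'' (and the ``$+1$'') and observe, as the paper does, that the single atomic threshold already suffices for every formula.
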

To prove this, we will combine Theorem 5 with the following result:

\begin{theorem} \label{inductiveConstruction}
	Let semialgebraic sets $T_1, T_2, \ldots, T_m$, time step $N$ and finite unions of open intervals $J_1, J_2, \ldots, J_m \subseteq \torus$ be such that for all $1 \leq i \leq m$ and time steps $n> N$, $n\in \mathcal{Z}(T_i)$ if and only if $\gamma^n\in J_i$. Then for every LTL formula $\varphi$ over $T_1, T_2, \ldots, T_m$ there exists a finite union of open intervals $J_\varphi \subseteq \torus$ such that for all $n> N$, $n\in \mathcal{Z}(\varphi)$ if and only if $\gamma^n\in J_\varphi$. Moreover, such $J_\varphi$ is unique.
\end{theorem}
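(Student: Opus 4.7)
The plan is to prove Theorem~\ref{inductiveConstruction} by structural induction on $\varphi$. The base case ($\varphi = T_i$: take $J_\varphi := J_i$) and the Boolean cases ($J_{\varphi_1 \land \varphi_2} := J_{\varphi_1} \cap J_{\varphi_2}$ and $J_{\varphi_1 \lor \varphi_2} := J_{\varphi_1} \cup J_{\varphi_2}$) are immediate, since finite unions of open intervals of $\torus$ are closed under binary intersection and union. For $\Next$, the semantics give $n \in \mathcal{Z}(\Next\varphi) \iff n+1 \in \mathcal{Z}(\varphi)$, and since $n > N$ implies $n+1 > N$, the inductive hypothesis yields $J_{\Next\varphi} := \gamma^{-1}J_\varphi$, a rotation of a finite union of open intervals.

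The central case is $\varphi = \varphi_1 \Until \varphi_2$. For each $k \geq 0$, define
\[
J_\varphi^{(k)} \;:=\; \gamma^{-k}J_{\varphi_2} \;\cap\; \bigcap_{j=0}^{k-1}\gamma^{-j}J_{\varphi_1},
\]
each a finite union of open intervals. The LTL semantics yield, for all $n > N$, the equivalence $n \in \mathcal{Z}(\varphi) \iff \gamma^n \in \bigcup_{k \geq 0} J_\varphi^{(k)}$. The task is to produce a \emph{finite} union of open intervals $J_\varphi$ agreeing with this infinite union on the orbit $\{\gamma^n : n > N\}$. When $F := \torus \setminus J_{\varphi_1}$ contains an open sub-arc, Lemma~2 applied to that sub-arc provides a uniform bound $K$ such that the orbit from any $x \in \torus$ enters $F$ within $K$ steps; the Until-witness $k$ is thereby capped at $K$, and $J_\varphi := \bigcup_{k=0}^{K}J_\varphi^{(k)}$ is the required finite union.

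Two degenerate sub-cases remain. If $J_{\varphi_1} = \torus$, then Until collapses to ``eventually $\varphi_2$'', so by Lemma~2 applied to $J_{\varphi_2}$ one takes $J_\varphi := \torus$ when $J_{\varphi_2} \neq \emptyset$ and $J_\varphi := \emptyset$ otherwise. If $F$ is a non-empty finite set of isolated points, then since $\gamma$ is not a root of unity the orbit $(\gamma^n)$ is injective, so only finitely many $n > N$ satisfy $\gamma^n \in F$; for all other $n$, the orbit suffix from $n$ never leaves $J_{\varphi_1}$, so Until again reduces to eventual $\varphi_2$. Consequently $E := \{n > N : \text{Until fails at }n\}$ is finite, and $J_\varphi := \torus \setminus \{\gamma^n : n \in E\}$---a finite union of open intervals---correctly classifies every orbit point by injectivity. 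The Release case is handled by the dual construction, using the greatest-fixed-point characterisation of Release together with Lemma~2 applied to $\torus \setminus J_{\varphi_2}$. Uniqueness of $J_\varphi$ follows from Kronecker's theorem: density of $(\gamma^n)_{n > N}$ in $\torus$ forces the symmetric difference of two valid candidates to be an open set disjoint from a dense subset of $\torus$, hence empty. The main obstacle is the Until case, specifically showing that the a priori infinite union $\bigcup_{k\geq 0} J_\varphi^{(k)}$ can always be truncated to a finite one: Lemma~2 handles the generic situation, while the injectivity of the orbit rescues the degenerate isolated-points sub-case.
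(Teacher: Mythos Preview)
Your proof is correct, but your handling of $\Until$ differs from the paper's in a substantive way.  You bound the $\Until$-witness by applying Lemma~2 to an open sub-arc of $F = \torus \setminus J_{\varphi_1}$: since the orbit must \emph{leave} $J_{\varphi_1}$ within $K$ steps, any valid witness is at most $K$.  The paper instead applies Lemma~2 directly to $J_{\varphi_2}$: since the orbit must \emph{enter} $J_{\varphi_2}$ within $b(\varphi_2)$ steps, the minimal witness (if one exists) is at most $b(\varphi_2)$.  Both arguments are valid, but the paper's choice yields a single degenerate case ($J_{\varphi_2} = \emptyset$), whereas yours produces two ($J_{\varphi_1} = \torus$ and $F$ finite), the second of which needs the extra injectivity trick.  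More importantly, the paper's bound $b(\varphi_2) = 2\pi(2\pi/d(\varphi_2))^{\size{\gamma}^D}$ is expressed in terms of the \emph{minimal} interval length of $J_{\varphi_2}$, and this is precisely what drives the quantitative recursion in Section~4.4 (Lemmas~8 and~9, Theorem~10): the retraction depth of $J_\varphi$ is controlled by $d(\varphi_2)$, not by the geometry of the complement of $J_{\varphi_1}$.  Your bound $K$ would instead depend on the length of an arc in $\torus \setminus J_{\varphi_1}$, which does not plug into that recursion as cleanly.

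Two small points.  In your isolated-points sub-case you should explicitly dispose of $J_{\varphi_2} = \emptyset$ first (otherwise $E$ is infinite), and the phrase ``for all other $n$'' should read ``for all sufficiently large $n$'': an $n$ with $\gamma^n \notin F$ can still have $\gamma^{n+j} \in F$ for some later $j$.  For $\Release$, the paper does not run a dual fixed-point argument but instead reduces to the already-handled $\Until$ via the equivalence $\varphi_1 \Release \varphi_2 \equiv \varphi_2 \Until (\varphi_1 \wedge \varphi_2)$ whenever $J_{\varphi_1} \cap J_{\varphi_2} \neq \emptyset$, with the remaining case handled directly.
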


\begin{proof}
	The uniqueness of $J_\varphi$, if such a finite union of open sets exists, can be established using the same topological argument as the one used in the uniqueness result accompanying Theorem 5. 
	In order to prove existence of $J_\varphi$ with the desirable properties we proceed by induction on the structure of $\varphi$. If $\varphi = T_i$, then $J_\varphi=J_i$ by assumption.
	
	Next, let $J_{\varphi_1}$ and $J_{\varphi_2}$ be the finite unions of open intervals corresponding to $\varphi_1$ and $\varphi_2$, respectively. Recall from Section 3 that we can assume $\varphi$ does not contain the $\lnot$ operator.
	\begin{enumerate}
		\item Suppose $\varphi = \varphi_1 \lor \varphi_2$. Then $J_\varphi = J_{\varphi_1}\cup J_{\varphi_2}$.
		
		\item Similarly, if $\varphi = \varphi_1 \land \varphi_2$ then $J_\varphi = J_{\varphi_1}\cap J_{\varphi_2}$.
		
		\item Consider $\varphi = \Next \varphi_1$. Suppose $n> N$. Then
		\begin{equation*}
		\begin{split}
		n\in \mathcal{Z}(\Next \varphi_1) 
		&\iff
		n+1 \in   \mathcal{Z}(\varphi_1) \\
		& \iff \gamma^{n+1} \in J_{\varphi_1}\\
		& \iff \gamma^n \in \gamma^{-1} J_{\varphi_1}.		
		\end{split}
		\end{equation*} 
		The first equivalence follows from the semantics of the $\Next$ operator, and the second from the fact that	 $n+1 > N$.
		Hence $J_\varphi = \gamma^{-1} J_{\varphi_1}$.
		
		\item The main difficulty lies in analysing the case $\varphi =  \varphi_1 \Until \varphi_2$. If $J_{\varphi_2}$ is empty, then $J_\varphi$ will be empty too. 
		Now suppose $J_{\varphi_2}$ is not empty, and let $l$ be length of a maximal interval in $J_{\varphi_2}$. 
		Using Lemma 2 we can effectively compute a bound 
		\[
		b = b(\varphi_2) = 2\pi\left(
		\frac{2\pi}{l}
		\right)^{\size{\gamma}^D}
		\] 
		such that $\gamma^n$ returns to $J_{\varphi_2}$ after at most $b$ time steps---that is, for every $n > N$, there exists $0 \leq \Delta \leq b$ such that $\gamma^{n+\Delta} \in J_{\varphi_2}$. 
		Thus we have that for $n > N$, 
		\begin{equation*}
		\begin{split}
		n \in \mathcal{Z}(\varphi_1 \Until \varphi_2)
		& \iff \exists \Delta  \geq 0. \text{ } n+\Delta \in \mathcal{Z}(\varphi_2) \land \forall m \in [n, n+\Delta). \text{ } m \in \mathcal{Z}(\varphi_1)  \\
		& \iff \exists \Delta \in [0,b] . \text{ } n+\Delta \in \mathcal{Z}(\varphi_2) \land \forall m \in [n, n+\Delta). \text{ } m \in \mathcal{Z}(\varphi_1)  \\
		& \iff \bigvee_{\Delta = 0}^{b} \left(
		n+\Delta \in \mathcal{Z}(\varphi_2) \land \bigwedge_{m
			= 0}^{\Delta-1} n+m \in 
		\mathcal{Z}(\varphi_1) \right) .
		\end{split}
		\end{equation*} 
		By the induction hypothesis, $n+\Delta \in \mathcal{Z}(\varphi_2) \iff \gamma^{n+\Delta} \in J_{\varphi_2}$, which is equivalent to $\gamma^n \in \gamma^{-\Delta}J_{\varphi_2}$. Similarly, $n+m \in \mathcal{Z}(\varphi_1) \iff \gamma^n \in \gamma^{-m}J_{\varphi_1}$. Hence we obtain that  
		\[
		n \in \mathcal{Z}(\varphi_1 \Until \varphi_2) 
		\iff
		\gamma^n \in 
		\bigvee_{\Delta = 0}^{b}
		\left(
		\gamma^{-\Delta}J_{\varphi_2} \land
		\bigwedge_{m = 0}^{\Delta-1} \gamma^{-m}J_{\varphi_1}
		\right)
		\]
		and therefore, the union of open intervals corresponding to $\varphi$ is 
		\[
		J_\varphi = \bigcup_{\Delta = 0}^{b}
		\left(
		\gamma^{-\Delta}J_{\varphi_2} \cap
		\bigcap_{m = 0}^{\Delta-1} \gamma^{-m}J_{\varphi_1}
		\right)
		.
		\]

		\item Finally, suppose $\varphi = \varphi_1 \Release \varphi_2$. If $J_{\varphi_1 \land \varphi_2} = J_{\varphi_1} \cap J_{\varphi_2}$ is empty, then $J_\varphi = \torus$ if $J_{\varphi_2} = \torus$ and $J_\varphi$ is empty otherwise. If, on the other hand, $J_{\varphi_1 \land \varphi}$ is not empty, then $J_\varphi = J_{\varphi_2 \Until (\varphi_1 \land \varphi_2)}$ which can be computed using the preceding analysis.  
		%\begin{itemize}
		%\item If $J_{\varphi_2} = \torus$ and $J_{\varphi_1} = \emptyset$, then for all $n > N$, $n \in \Z{\varphi_2}$. Hence for all $n > N$, $n \in \Z{\varphi_1 R \varphi_2}$ and $J_\varphi = \torus$.
		
		%\item If $J_2 \neq \torus$, then by the density of $(\gamma^i)_{i \in \naturals}$ in $\torus$, at no time $n'> N$ it holds that $\varphi_2$ holds at all times after $n'$; hence $n > N \in \Z{\varphi_1 R \varphi_2}$ if and only if $n \in \Z{\varphi_2 U (\varphi_1 \land \varphi_2)}$; now we can use the preceding analysis to compute $J_\varphi$.
		%\end{itemize}
		
	\end{enumerate}
\end{proof}
From the construction described above we can observe that the endpoints of components of $J_{\varphi}$ come from those of its immediate subformulas. For example, the endpoints in $J_{\varphi_1 \Until \varphi_2}$ are all endpoints of either $\varphi_1$ or $\varphi_1$ which have been multiplied by $\gamma^{-1}$ for at most $b(\varphi_2)$ steps. In general, the endpoints of component intervals in $J_{\varphi}$ are of the form $\gamma^{-n} z$ where $n$ is an integer and $z$ is a zero of a dominant function (as defined in Section 4.2) corresponding to some semialgebraic target set $T_i$ appearing in $\varphi$.

To prove Theorem 6, recall from Theorem 5 that we already know that for each atomic $T_i$, there exist $N_i = 2^{(\size{M} + \size{s} + \size{T_i})^{O(1)}}$ and $J_i$ such that for $n> N$, $n\in \mathcal{Z}(T_i)$ if and only if $\gamma^n\in J_i$. Taking $N = \max_{1 \leq i \leq m} N_i = {2^{\size{\mathcal{I}_b}}}^{O(1)}$ we obtain the desired result.

\subsection{Analysing the inductive construction of $\Z{T}$ quantitatively}

We now study how small the component intervals in the set $J_\varphi$ corresponding to a formula $\varphi$ can be. Our aim with this analysis is to be able to bound the \emph{return time} $\mathop{T}(\varphi)$ of $\varphi$ with respect to the orbit $\pi = \langle s, Ms, M^2s, \ldots \rangle$, defined as 
\[
T(\varphi) = \sup \{ t_2-t_1 \mid t_1, t_2 \in \naturals \land \pi[t_2, \infty) \vDash \varphi \land \pi[t, \infty) \nvDash \varphi \text{ for every } t_1 \leq t < t_2 \}.
\] 
Informally, $T(\varphi)$ denotes the longest time $\varphi$ remains false in $\pi = \langle s, Ms, M^2s, \ldots \rangle$ before becoming true. 

Recall from Section 4.3 that the endpoints of intervals in $J_\varphi$ are of the form $\gamma^{-n}z$ for some $z$ that is a root of a dominant function corresponding to an atomic predicate $T_i$ appearing in $\varphi$. Hence for an endpoint $u \in \torus$ we define the \emph{retraction depth} of $u$ to be the smallest integer $n$ such that $u = \gamma^{-n}z$ for such $z$.
Next, for an LTL formula $\varphi$ over $T_1, \ldots, T_m$, define
\begin{itemize}
	\item $d(\varphi)$ to be the length of a smallest maximal interval  in the finite union $J_\varphi$ of intervals corresponding to $\varphi$,
	\item $R(\varphi)$ to be the maximum of retraction depths of endpoints in $J_\varphi$, called the \emph{retraction depth} of $\varphi$, and
	%\item $N(\varphi)$ denote the number of temporal operators in $\varphi$, and 
	\item $D(\varphi)$ denote the maximum nesting depth of temporal operators in $\varphi$, with atomic formulas having depth $0$. 
\end{itemize}
Further, throughout this section let $\left\Vert \mathcal{I}_b \right\Vert = \size{M} + \size{s} + \sum_{i=1}^m \size{T_i}$ and $\I = \size{M} + \size{s} + \size{\varphi}$ where $T_1, \ldots, T_m$ are the atomic semialgebraic sets appearing in the input formula $\varphi$. We link the quantities defined above by analysing the inductive construction described in Theorem \ref{inductiveConstruction}.
\begin{lemma}
	For every $\varphi$, $d(\varphi) \geq \frac{1}{(R(\varphi)+2)^{\size{\mathcal{I}_b}^{O(1)}}}$.
\end{lemma}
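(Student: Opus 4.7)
The plan is to lower-bound $d(\varphi)$ by lower-bounding the minimum chord distance between distinct endpoints of $J_\varphi$, since on $\torus$ the arc distance is at least the chord distance. By the remark following Theorem~\ref{inductiveConstruction}, every endpoint of $J_\varphi$ has the form $u = \gamma^{-n}z$ with $0 \le n \le R(\varphi)$, where $z \in \torus$ is a zero of some dominant function $f_i$ associated to an atomic predicate $T_i$ appearing in $\varphi$. From the discussion preceding Lemma~4 each such $z$ is an algebraic number with $\size{z} \le \size{\mathcal{I}_b}^{O(1)}$.

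I then take any two distinct endpoints $u_1 = \gamma^{-n_1}z_1$ and $u_2 = \gamma^{-n_2}z_2$, and assume WLOG $n_1 \ge n_2$. Setting $n = n_1 - n_2 \in [0, R(\varphi)]$ and $\xi = z_1 z_2^{-1}$, since $\length{\gamma} = \length{z_2} = 1$, I get
\[
\length{u_1 - u_2} \;=\; \length{z_1 - \gamma^n z_2} \;=\; \length{\gamma^n - \xi},
\]
with $\gamma^n \neq \xi$ because $u_1 \neq u_2$, and $\size{\xi} \le \size{\mathcal{I}_b}^{O(1)}$ by the size bounds on algebraic operations recalled in Section~2.1.

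I next split on $n$. For $n \ge 2$, Lemma~1 applied with $\alpha = \gamma$ and $\beta = \xi$ yields
\[
\length{\gamma^n - \xi} \;\ge\; \frac{1}{n^{(\size{\gamma} + \size{\xi})^C}} \;\ge\; \frac{1}{R(\varphi)^{\size{\mathcal{I}_b}^{O(1)}}}.
\]
For $n \in \{0,1\}$, the quantity $\gamma^n - \xi$ is a nonzero algebraic number of degree and description length polynomial in $\size{\mathcal{I}_b}$, so a standard Liouville-style bound (for instance, applying Mignotte's inequality to the minimal polynomial of $\gamma^n - \xi$) gives $\length{\gamma^n - \xi} \ge 2^{-\size{\mathcal{I}_b}^{O(1)}}$. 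Since $R(\varphi)+2 \ge \max(R(\varphi),2)$, both bounds fit into the uniform estimate $(R(\varphi)+2)^{-\size{\mathcal{I}_b}^{O(1)}}$; taking the infimum over all pairs of distinct endpoints of $J_\varphi$ yields the claimed lower bound on $d(\varphi)$.

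The main technical point will be packaging both regimes into a single expression. Lemma~1 provides exactly the polynomial-in-$n$ dependence needed for $n \ge 2$, so only the small-$n$ residuals have to be handled separately via an algebraic-number estimate; the ``$+2$'' summand in the exponent base $R(\varphi)+2$ is precisely what allows the Liouville contribution to be absorbed alongside the Baker-W\"ustholz one.
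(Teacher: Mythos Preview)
Your proposal is correct and follows essentially the same approach as the paper: both reduce the problem to bounding $\length{\gamma^{n}-\xi}$ for $\xi=z_1/z_2$, split into the cases $n\ge 2$ (handled by Lemma~1) and $n\in\{0,1\}$ (handled by a root-separation/Liouville-type bound), and then package the two regimes into the single expression $(R(\varphi)+2)^{-\size{\mathcal{I}_b}^{O(1)}}$. The only cosmetic difference is that the paper phrases the small-$n$ case as Mignotte's root-separation bound applied to the product of the minimal polynomials of $z'$ and $\gamma^{n_1-n_2}$, whereas you phrase it as a Liouville bound on the nonzero algebraic number $\gamma^n-\xi$; these are equivalent.
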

\begin{proof}
	We proceed by bounding how close two endpoints of an interval in $J_{\varphi}$ can be given the retraction depth $R(\varphi)$ of $J_{\varphi}$. 
	Let $z_1, z_2$ be roots of dominant functions corresponding to $T_1, T_2$ that appear in $\varphi$ and $\gamma^{-n_1}z_1, \gamma^{-n_2}z_2$ two endpoints of component intervals in $J_{\varphi}$. We show that $\size{\gamma^{-n_1}z_1 - \gamma^{-n_2}z_2} \geq \frac{1}{(N+2)^{\size{\mathcal{I}_b}^{O(1)}}}$ where $N = \length{n_1-n_2}$. The statement of the lemma then follows from the fact that $N \leq R(\varphi)$ by definition.

	For simplicity assume $n_1 \geq n_2$. Recall that the roots $z_1, z_2$  have degree $\Ib^{O(1)}$ and height $2^{\Ib^{O(1)}}$ whereas $\gamma$ has degree at most $12$ and height $\Ib^{O(1)}$. Next, observe that
	\[
	\size{\gamma^{-n_1}z_1 - \gamma^{-n_2}z_2} = \size{z_1 - \gamma^{n_1-n_2}z_2} = \size{\frac{z_1}{z_2} - \gamma^{n_1-n_2}} = \size{z' - \gamma^{n_1-n_2}}
	\]
	where $z' \in \torus$ is also of degree $\Ib^{O(1)}$ and height $2^{\Ib^{O(1)}}$. 
	\begin{itemize}
		\item If $n_1-n_2 < 2$, then we use the root separation bound given in Section 2.1 to obtain that $\size{z_1 - \gamma^{n_1-n_2}z_2} \geq \frac{1}{2^{\Ib^{O(1)}}}$. This can done by separating the roots of the product polynomial $p_1 p_2$ where $p_1$, $p_2$ are polynomials that have $z'$ and $\gamma^{n_1-n_2}$ as roots. Observe that $p_1 p_2$ itself is also of degree $\Ib^{O(1)}$ and height $2^{\Ib^{O(1)}}$.
		\item If $n_1-n_2 \geq 2$, then by a direct application of Lemma 1 we obtain that
		\[
		\size{z_1 - \gamma^{n_1-n_2}z_2} \geq \frac{1}{(n_1-n_2)^{(\size{\gamma} + \size{z'})^C}} = \frac{1}{(n_1-n_2)^{\Ib^{O(1)}}}.
		\]
	\end{itemize}
	Combining the two bounds yields the desired result. 
\end{proof}

We now move onto analysing the retraction depth of $\varphi$. If $\varphi$ does not contain any temporal operators then all the endpoints in $J_{\varphi}$ are roots of dominant functions themselves. Hence, by definition, $R(\varphi) = 0$. For general $\varphi$, on the other hand, we have the following result.

\begin{lemma}
	For every formula $\varphi$ with temporal operator depth $D(\varphi) > 0$ there exist $k \leq D(\varphi)$ formulas $\varphi_1, \ldots, \varphi_k$ over the same atomic predicates as $\varphi$ such that $D(\varphi_1) < D(\varphi_2) < \cdots < D(\varphi_k) < D(\varphi)$ and  
	\[
	R(\varphi) \leq k + 2 \pi \sum_{i=1}^{k}
	\left(\frac{2\pi}{d(\varphi_i)}\right)^
	{{\size{\gamma}^D}}.
	\]
\end{lemma}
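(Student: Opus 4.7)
The plan is to proceed by structural induction on $\varphi$, taking Lemma~9 itself as the inductive hypothesis; the statement is vacuous when $D(\varphi)=0$. The inductive step rests on the recurrences on $R(\cdot)$ that are immediate from the construction in the proof of Theorem~\ref{inductiveConstruction}: $R(\varphi_1 \land \varphi_2), R(\varphi_1 \lor \varphi_2) \leq \max(R(\varphi_1), R(\varphi_2))$; $R(\Next \varphi_1) \leq R(\varphi_1)+1$; and $R(\varphi_1 \Until \varphi_2) \leq \max(R(\varphi_1), R(\varphi_2))+b(\varphi_2)$, where $b(\psi) := 2\pi(2\pi/d(\psi))^{\size{\gamma}^D}$ is the return-time bound supplied by Lemma~2. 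Release reduces to until via the equivalence used in Theorem~\ref{inductiveConstruction}.

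The Boolean and next cases are routine. In the Boolean case, invoking the IH on whichever of $\varphi_1, \varphi_2$ realises the max of $R$ directly yields a list that satisfies the depth conditions for $\varphi$, since subformula depths are bounded by $D(\varphi)$. In the next case, the IH on $\varphi_1$ supplies $\chi_1,\ldots,\chi_{k'}$ at strictly increasing depths below $D(\varphi_1)=D(\varphi)-1$; appending $\varphi_1$ itself fills the still-free depth $D(\varphi)-1$, and the $+1$ contribution of the next operator is absorbed by the increment $k=k'+1 \leq D(\varphi)$, with $b(\varphi_1)\geq 0$ as additional slack.

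The main work is in the until case. Let $\alpha \in \{\varphi_1, \varphi_2\}$ achieve $\max(R(\varphi_1), R(\varphi_2))$; the sub-case $D(\alpha)=0$, where $R(\alpha)=0$ forces $R(\varphi) \leq b(\varphi_2)$, is handled immediately with $k=1$. When $D(\alpha)>0$, the IH supplies $\chi_1,\ldots,\chi_{k'}$ at depths strictly below $D(\alpha)\leq D(\varphi)-1$, and the task is to absorb $b(\varphi_2)$ into the list without breaking the strict-depth chain---the worry being that $D(\varphi_2)$ may coincide with some $D(\chi_j)$, or lie anywhere below $D(\alpha)$. The key observation is the identity $b(\Next^j \psi)=b(\psi)$ for every $j\geq 0$, which holds because $J_{\Next \psi}=\gamma^{-1}J_\psi$ is a rigid rotation of the unit circle and hence preserves every arc length, in particular $d(\psi)$. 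Setting $j := D(\varphi)-1-D(\varphi_2) \geq 0$ and appending $\Next^j \varphi_2$ to the list produces a formula of depth exactly $D(\varphi)-1$, strictly greater than every $D(\chi_i)<D(\alpha)\leq D(\varphi)-1$, so it occupies a guaranteed free slot. The resulting bound $(k'+1)+\sum_{i=1}^{k'}b(\chi_i)+b(\varphi_2)$ dominates $R(\alpha)+b(\varphi_2)\geq R(\varphi)$ with one unit of slack, and $k'+1 \leq D(\alpha)+1 \leq D(\varphi)$ as required. The main obstacle I anticipate is precisely this possible depth collision in the until step: a naive attempt to append $\varphi_2$ itself fails whenever $D(\varphi_2)$ is already taken or too small, and the rotational invariance of $d(\cdot)$ under next-padding is the uniform remedy that promotes $\varphi_2$ into the topmost available depth $D(\varphi)-1$ without changing its $b$-value.
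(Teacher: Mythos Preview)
Your argument is correct and in fact handles one point more carefully than the paper does. The paper's proof is organised differently: it first establishes a \emph{one-step} inequality
\[
R(\varphi) \leq 1 + R(\varphi') + 2\pi\left(\frac{2\pi}{d(\varphi'')}\right)^{\size{\gamma}^D}
\]
for some $\varphi'$ (a subformula) and $\varphi''$ (possibly not), both of strictly smaller temporal depth, via the same case analysis on the top connective, and then simply iterates this on $\varphi'$ at most $D(\varphi)$ times to obtain the list. Your proof instead runs a direct structural induction, carrying the full statement of the lemma as the hypothesis.

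The substantive difference is your treatment of the strict-depth-chain requirement $D(\varphi_1)<\cdots<D(\varphi_k)$. In the paper's iteration, each $\varphi''_i$ produced at step $i$ satisfies $D(\varphi''_i)<D((\varphi')^{(i-1)})$, but nothing in the text forces the $D(\varphi''_i)$ to be pairwise distinct; the paper leaves this point implicit (in the Boolean and $\Next$ cases $\varphi''$ is arbitrary, and in the $\Until$/$\Release$ cases one could adjust). Your $\Next^j$-padding observation---that $J_{\Next\psi}=\gamma^{-1}J_\psi$ is a rigid rotation, whence $d(\Next^j\psi)=d(\psi)$ and $b(\Next^j\psi)=b(\psi)$---is exactly the device that makes this explicit: it lets you promote $\varphi_2$ (or $\varphi_1\wedge\varphi_2$ in the release case) to the guaranteed-free depth $D(\varphi)-1$ without altering its $b$-value. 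That is a clean and reusable fix.

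Two small points worth tightening. First, in the $\Until$ case you should note the degenerate sub-case $J_{\varphi_2}=\emptyset$, where $b(\varphi_2)$ is undefined but $J_\varphi=\emptyset$ and hence $R(\varphi)=0$, so $k=0$ suffices. Second, for $\Release$ your reduction to $\varphi_2\Until(\varphi_1\wedge\varphi_2)$ requires the inductive hypothesis on $\varphi_1\wedge\varphi_2$, which is not a subformula of $\varphi$; this is harmless since $R(\varphi_1\wedge\varphi_2)\leq\max(R(\varphi_1),R(\varphi_2))$ lets you pass straight to a genuine subformula, but it is worth saying so (or phrasing the induction on the pair $(D(\varphi),\text{size}(\varphi))$ lexicographically).
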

\begin{proof}
	Let $\varphi$ be a formula with $D(\varphi) > 0$. We first show that there exist a subformula $\varphi'$ of $\varphi$ with smaller temporal operator depth and a formula $\varphi''$ (possibly not a subformula of $\varphi$) with $D(\varphi'') < D(\varphi)$ such that
	\[
	R(\varphi) \leq 1 + R(\varphi') + 2 \pi \left(\frac{2\pi}{d(\varphi'')}\right)^
	{{\size{\gamma}^D}}.
	\]
	The statement of the lemma then follows by repeatedly applying the inequality to $R(\varphi')$ at most $D(\varphi)$ times. We proceed by a case analysis on the structure of $\varphi$.
	\begin{enumerate}
		\item If $\varphi = \varphi_1 \land \varphi_2$ or $\varphi = \varphi_1 \lor \varphi_2$, then $R(\varphi) \leq \max \{R(\varphi_1), R(\varphi_2)\}$. Hence we can take $\varphi'$ to be the immediate subformula with the larger retraction depth and $\varphi''$ to be any subformula of $\varphi$;
		
		\item If $\varphi = \Next \varphi_1$, then $R(\varphi) \leq 1 + R(\varphi')$ for a smaller subformula $\varphi'$ (namely $\varphi_1$);
		
		\item If $\varphi = \varphi_1 \Until \varphi_2$, and $J_{\varphi_2}$ is empty, then so is $J_{\varphi}$ and $R(\varphi) = 0$. If $J_{\varphi_2}$ is not empty, then the inductive construction shows that the endpoints of $J_\varphi$ are all endpoints of $J_{\varphi_1}$ or $J_{\varphi_2}$ multiplied by $\gamma^{-1}$ for at most $b = b(\varphi_2) = 2 \pi \left(
		\frac{2\pi}{d(\varphi_2)}
		\right)^{\size{\gamma}^D}$ steps (Theorem \ref{inductiveConstruction}, case 4). 
		Hence $R(\varphi) \leq \max \{R(\varphi_1), R(\varphi_2)\} + b(\varphi_2)$ and we can take
		$\varphi'$ to be the immediate subformula of $\varphi$ with larger retraction depth and $\varphi''$ to be $\varphi_2$ (which indeed does have a smaller temporal operator depth and also happens to be a subformula of $\varphi$).
		
		\item Finally, suppose $\varphi = \varphi_1 \Release \varphi_2$. The only non-trivial case arises from non-empty $J_{\varphi_1 \land \varphi_2}$, where the endpoints of $J_{\varphi}$ are all endpoints of $J_{\varphi_1 \land \varphi_2}$ or $J_{\varphi_2}$ multiplied by $\gamma^{-1}$ for at most $b(\varphi_1 \land \varphi_2)$ steps. In this case, similarly to the above, $R(\varphi) \leq R(\varphi') + 2 \pi \left(
		\frac{2\pi}{d(\varphi'')}\right)^{\size{\gamma} ^D}$ where $\varphi'$ is the immediate subformula of $\varphi$ with smaller temporal operator depth and $\varphi''$ is a formula with smaller temporal operator depth (namely, $\varphi_1 \land \varphi_2$, which is not a subformula of $\varphi$).
	\end{enumerate}
\end{proof}

We now combine Lemmas 8 and 9 in the following way. Let $q$ be a polynomial such that ${\size{\gamma}^D} \leq q(\Ib)$ and $d(\varphi) \geq \frac{1}{(R(\varphi)+2)^{q(\Ib)}}$. We obtain that for every $\varphi$ with temporal operator depth $D(\varphi) > 0$ there exist $k \leq D(\varphi)$ formulas $\varphi_1, \ldots, \varphi_k$ with $D(\varphi_1) < D(\varphi_2) < \cdots < D(\varphi_k) < D(\varphi)$ such that
\[
d(\varphi) \geq \frac{1}{(R(\varphi)+2)^{q(\Ib)}} \geq 
\frac{1}{
	\left(
	k + 2 + 2 \pi \sum_{i=1}^{k} 	 \left(\frac{2\pi}{d(\varphi_i)}\right)^
	{q(\Ib)}
	\right)^{q(\Ib)}
}
.
\]
In Appendix D we analyse this recursive relation and show the following.
\begin{theorem}
	For any LTL formula $\varphi$, $d(\varphi) = \frac{1}{
		2^  {2^ {\left(\Ib + D(\varphi)\right)^{O(1)}} } 
	}$. In particular, $d(\varphi) = \frac{1}{
		2^  {2^ {\I^{O(1)}} } 
	}$, where $\I = \size{M}+\size{s}+\size{\varphi}$.
\end{theorem}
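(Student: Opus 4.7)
The plan is to solve, by induction on the temporal operator depth $D(\varphi)$, the recursive inequality displayed immediately before the theorem. Write $B = q(\Ib)$ (which is polynomial in $\Ib$) and let $h(D) = \sup\{1/d(\psi) : D(\psi) \leq D\}$; it suffices to prove $\log_2 h(D) \leq 2^{(\Ib + D)^{O(1)}}$, since exponentiating twice will then deliver the stated bound on $d(\varphi)$, and the ``in particular'' statement will follow from $D(\varphi) \leq \size{\varphi}$ and $\Ib \leq \I$.

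For the base case $D(\varphi) = 0$, the formula has no temporal operators, so every endpoint of $J_\varphi$ is a root of a dominant function and therefore $R(\varphi) = 0$. Lemma~8 then yields $d(\varphi) \geq 1/2^B$, i.e. $h(0) \leq 2^B$. For the inductive step, combining Lemmas~8 and~9 exactly as in the displayed recurrence preceding the theorem gives, for $D(\varphi) = D \geq 1$,
\[
\frac{1}{d(\varphi)} \;\leq\; \left(k + 2 + 2\pi \sum_{i=1}^{k} (2\pi/d(\varphi_i))^B\right)^B \;\leq\; \bigl(C\,D\,(2\pi)^B\,h(D-1)^B\bigr)^B
\]
for an absolute constant $C$, where $k \leq D$ and $D(\varphi_i) < D$. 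Taking logarithms yields
\[
\log h(D) \;\leq\; B^2 \log h(D-1) + O(B\log D + B^2),
\]
and unfolding $D$ times with $\log h(0) \leq B$ produces $\log h(D) \leq B^{2D+1} + O\bigl(B^{2D}(B+\log D)\bigr) = B^{O(D)}$.

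Since $B$ is polynomial in $\Ib$, this gives $\log h(D) \leq 2^{O(D\log\Ib)} \leq 2^{(\Ib + D)^{O(1)}}$, i.e. $d(\varphi) \geq 1/2^{2^{(\Ib + D(\varphi))^{O(1)}}}$, as required. The main technical point is that each level of the induction multiplies $\log(1/d)$ by a factor $B^2$, not just $B$: one factor of $B$ arises from Lemma~9 (the return time $b(\varphi_2)$ depends polynomially in the exponent on $1/d(\varphi_2)$) and a second from the outer exponent of Lemma~8 (relating $d$ to $R$). At first sight this compounding suggests a tower of exponentials of height $D(\varphi)$; the key observation that makes the bound collapse to a double exponential is that passing to $\log$-space converts the recurrence into a linear one in $\log h$ with multiplier $B^2$, whose $D$-fold iterate is merely $B^{O(D)}$.
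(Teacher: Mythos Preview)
Your proof is correct and follows essentially the same approach as the paper: induction on the temporal-operator depth using the recurrence obtained by combining Lemmas~8 and~9. The only stylistic difference is that the paper guesses the closed form $2^{2^{(\Ib+D(\varphi))^K}}$ and verifies it by substitution, whereas you pass to $\log h(D)$, obtain the linear recurrence $\log h(D)\le B^2\log h(D-1)+O(B^2+B\log D)$, and unfold it explicitly to $B^{O(D)}$; both routes yield the same doubly-exponential bound.
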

The interpretation is that all intervals in $J_\varphi$ have length bounded below by the reciprocal of quantity whose magnitude is
doubly exponential in the length of the input.
In particular, we can compute a uniform lower bound that only depends on the encoding length of the atomic predicates and the depth of the temporal operators (in addition to $\size{M}$ and $\size{s}$) and not on the structure of $\varphi$. 

We are now in a position to apply our quantitative analysis to the model-checking problem via the return time, as discussed at the beginning of this section.
\begin{theorem}
	The return time $T(\varphi)$ of any LTL formula $\varphi$ with respect to an orbit $\langle s, Ms, M^2s, \ldots \rangle$ is $2^  {2^ {\left(\Ib + D(\varphi)\right)^{O(1)}} }$. In particular, $T(\varphi) = 2^  {2^ {\I^{O(1)}} }$.
\end{theorem}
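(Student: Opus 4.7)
The plan is to combine Theorem 6 (tail structure on $\torus$), Lemma 2 (recurrence time for irrational rotations on the torus), and Theorem 10 (doubly exponential lower bound on $d(\varphi)$). By Theorem 6 there exist $N = 2^{\Ib^{O(1)}}$ and a finite union of open intervals $J_\varphi \subseteq \torus$ such that for every $n > N$, $\pi[n,\infty) \vDash \varphi$ iff $\gamma^n \in J_\varphi$. This reduces bounding the return time along the orbit $\langle s, Ms, M^2s, \ldots\rangle$ to bounding the recurrence time of the torus trajectory $(\gamma^n)_n$ with respect to the target $J_\varphi$.

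I would then split on whether $J_\varphi$ is empty. If $J_\varphi = \emptyset$, then $\varphi$ is satisfied at no time $n > N$, so any pair $(t_1, t_2)$ contributing to the supremum in the definition of $T(\varphi)$ must have $t_2 \leq N$, giving $T(\varphi) \leq N$. Otherwise, pick any component interval $I \subseteq J_\varphi$, whose length satisfies $\length{I} \geq d(\varphi)$. Since $\gamma$ is not a root of unity, Lemma 2 applied with $\lambda = \gamma$ and $J = I$ yields that from any starting point $\gamma^n \in \torus$ there exists $\Delta \leq 2\pi\bigl(\tfrac{2\pi}{\length{I}}\bigr)^{\size{\gamma}^D}$ such that $\gamma^{n+\Delta} \in I \subseteq J_\varphi$, where $D$ is the universal constant provided by Lemma 2. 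Hence, for $t_1, t_2 > N$ with no satisfying time in $[t_1, t_2)$, the gap $t_2 - t_1$ is bounded by $\Delta$; for $t_1 \leq N < t_2$ the first satisfying time after $N$ is at most $N + \Delta$, so $t_2 - t_1 \leq N + \Delta$; and for $t_2 \leq N$ the gap is trivially at most $N$.

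Substituting the bound $\length{I} \geq d(\varphi) \geq 1/2^{2^{(\Ib + D(\varphi))^{O(1)}}}$ from Theorem 10, and using that $\size{\gamma}^D$ is polynomial in $\Ib$ (since $\gamma$ has degree at most $12$ and height polynomial in $\size{M}$, while $D$ is a universal constant), raising $2\pi/\length{I}$ to the power $\size{\gamma}^D$ preserves the form $2^{2^{(\Ib + D(\varphi))^{O(1)}}}$, and the additive $N = 2^{\Ib^{O(1)}}$ is absorbed into this double exponential. This establishes the first claimed bound, and the ``in particular'' conclusion follows from $\Ib + D(\varphi) = O(\I)$. I do not anticipate any serious obstacles: the heavy analytic lifting has been done in Theorem 10, and the remaining work is essentially routine bookkeeping of the polynomial exponents hidden inside the $O(1)$ notation, together with a careful case split on whether $J_\varphi$ is empty so that the supremum in the definition of $T(\varphi)$ is treated correctly in every regime.
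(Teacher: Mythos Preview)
Your proposal is correct and follows essentially the same approach as the paper's proof: invoke the tail characterisation of $\mathcal{Z}(\varphi)$ via $J_\varphi$ after threshold $N = 2^{\Ib^{O(1)}}$, apply Lemma~2 with the interval-length lower bound from Theorem~10 to bound the recurrence time on the tail, and absorb the additive $N$ into the double exponential. The paper's argument is terser (it does not spell out the empty/non-empty split or the three regimes for $t_1,t_2$), but your additional case analysis is harmless bookkeeping rather than a different strategy.
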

\begin{proof}
	Let $N = 2^{\Ib^{O(1)}}$ be the time after which whether the suffix $\langle M^ns, M^{n+1}s, \ldots \rangle \models \varphi$ depends only on $\gamma^n$, as described in Theorem 7. Applying Lemma 2 to Theorem 10 we obtain that the return time $T'(\varphi)$ of $\varphi$ with respect to $\langle M^{N+1}s, M^{N+2}s, \ldots \rangle$ is $2^{2^{\left(\Ib + D(\varphi)\right)^{O(1)}}}$. Hence the return time with respect to the original orbit is at most $N + T'(\varphi) = 2^{2^{\left(\Ib + D(\varphi)\right)^{O(1)}}}$. 
\end{proof}
We will use this result in Section 6 to construct, given an input formula, an equivalent formula (with respect to the given orbit) that only has bounded quantifiers and then proceed to solve the resulting finitary model-checking problem.

\section{The remaining cases}
Now suppose $M$ has three real eigenvalues or eigenvalues $\lambda, \overline{\lambda}, \rho$ with $\gamma = \frac{\lambda}{\length{\lambda}}$ a root of unity. In Appendices A and B we show that in both cases, for an atomic semialgebraic target~$T$, $\mathcal{Z}(T) =\{n \geq 0 : \langle M^ns, M^{n+1}s, \ldots \rangle \in T\}$ is a semilinear set for which an explicit representation can be computed. In particular,
\begin{theorem}
	Given a semialgebraic set $T$, a square matrix $M \in \rationals^{3 \times 3}$ with three real eigenvalues, and a starting point $s \in \rationals$, there exists an integer $N = {2 ^ {\size{\mathcal{I}_T}}}^{O(1)}$ and a computable $X \subseteq \{0,1\}$ such that for all $n > N$, $M^n s \in S$ if and only if $n \bmod 2 \in X$.
\end{theorem}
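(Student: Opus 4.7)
The plan is to adapt the normalised-expression technique of Section 4.1 to the real-eigenvalue setting, exploiting the fact that the only source of oscillation in $\rho^n$ for real $\rho$ is the sign $\mathrm{sign}(\rho)^n$, which is a function of $n \bmod 2$. First, I would reduce to atomic $T = \{x \in \reals^3 : p(x) \sim 0\}$, since any Boolean combination of parity-periodic sets is itself parity-periodic. Putting $M = PJP^{-1}$ in Jordan form and substituting yields an expansion
\[
p(M^n s) = \sum_{(p_1,p_2,p_3)} \alpha_{p_1,p_2,p_3}(n) \, \rho_1^{n p_1} \rho_2^{n p_2} \rho_3^{n p_3},
\]
where $\rho_1, \rho_2, \rho_3$ are the real eigenvalues and each $\alpha_{p_1,p_2,p_3}(n)$ is a polynomial in $n$ (of degree bounded by the largest Jordan block) whose coefficients are algebraic numbers of bit-length polynomial in $\size{\mathcal{I}_T}$.

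Next, writing $\rho_i = \epsilon_i |\rho_i|$ with $\epsilon_i \in \{+1,-1\}$, each monomial splits as $(\epsilon_1^{p_1} \epsilon_2^{p_2} \epsilon_3^{p_3})^n \cdot |\rho_1|^{n p_1} |\rho_2|^{n p_2} |\rho_3|^{n p_3}$, and the first factor depends only on the parity of $n$. Fixing parity therefore collapses the expression into an exponential polynomial whose bases are the positive algebraic numbers $|\rho_1|^{p_1} |\rho_2|^{p_2} |\rho_3|^{p_3}$. I would then group summands by base, sort the (finitely many) distinct bases in decreasing order, and identify the largest base whose associated polynomial-in-$n$ coefficient is not identically zero. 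The leading coefficient of that polynomial is an effectively computable algebraic number whose sign determines the sign of $p(M^n s)$ for all sufficiently large $n$ of the chosen parity; if the dominant polynomial vanishes for one parity, recurse into the next largest base. This process terminates because only finitely many bases occur.

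Finally, quantifying ``sufficiently large'' requires only a Mignotte-style lower bound on the nonzero dominant algebraic coefficient together with an effective separation between the dominant base and the next. Both quantities are standard for algebraic numbers of degree and height polynomial and exponential in $\size{\mathcal{I}_T}$ respectively, and combining them yields $N = 2^{\size{\mathcal{I}_T}^{O(1)}}$ such that for every $n > N$ the sign of $p(M^n s)$ is a function purely of $n \bmod 2$. The desired $X \subseteq \{0,1\}$ is then defined by checking for each parity whether the eventually fixed sign is compatible with $\sim 0$. The main obstacle I expect is the routine but finicky bookkeeping for Jordan blocks and for ties among dominant bases; crucially, Baker--W\"ustholz and the other transcendental machinery that was indispensable in Section 4 are \emph{not} needed here, precisely because no genuine complex rotation is present.
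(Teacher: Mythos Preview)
Your proposal is correct and follows essentially the same approach as the paper's Appendix~A: expand $p(M^n s)$ via Jordan form into a finite sum $\sum \alpha(n)\,\sigma^n$ with polynomial coefficients, isolate the term of maximal modulus, and bound $N$ by an elementary separation-of-algebraic-numbers argument (no Baker needed, exactly as you observe). The one organisational difference---you fix $n\bmod 2$ first and then work with positive bases $|\rho_1|^{p_1}|\rho_2|^{p_2}|\rho_3|^{p_3}$, whereas the paper sorts the signed products $\sigma_i=\rho_1^{p_1}\rho_2^{p_2}\rho_3^{p_3}$ by modulus and only at the end notes that $\mathrm{sign}(\sigma_1^n)$ depends on parity---is cosmetic, though your ordering has the minor advantage of cleanly handling ties $|\sigma_i|=|\sigma_j|$ with $\sigma_i=-\sigma_j$ (e.g.\ eigenvalues $2$ and $-2$), which the paper's strict assumption $|\sigma_i|>|\sigma_{i+1}|$ tacitly excludes.
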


\begin{theorem}
	Given a semialgebraic set $T$, a square matrix $M \in \rationals^{3 \times 3}$ with eigenvalues $\lambda, \overline{\lambda}, \rho$ where $\gamma = \frac{\lambda}{\length{\lambda}}$ is a root of unity, and a starting point $s \in \rationals$, there exists an integer $N = {2 ^ {\size{\mathcal{I}_T}}}^{O(1)}$ and a computable $X \subseteq \{0,1, \ldots, 287\}$ such that for all $n > N$, $M^n s \in S$ if and only if $n \bmod 288 \in X$.
\end{theorem}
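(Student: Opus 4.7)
The plan is to exploit the periodicity of $\gamma^n$ to reduce this case to Theorem~12 by partitioning the orbit into arithmetic progressions along which the rotational factor is constant. Since $\lambda$ has degree at most $3$ over $\rationals$ and $\gamma = \lambda/\length{\lambda}$ is a root of unity of degree at most $12$ (as established in Appendix~B), its order $k$ satisfies $\phi(k) \leq 12$; a careful enumeration of the cyclotomic subfields compatible with $\gamma^2 = \lambda/\overline{\lambda} \in \rationals(\lambda, \overline{\lambda})$ (which has degree at most $6$) shows that $\operatorname{ord}(\gamma)$ divides a small universal constant, and the additional factor of $2$ accounting for the potential sign alternation of $\rho^n$ when $\rho < 0$ yields the overall bound $288$.

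Next, diagonalising $M = PDP^{-1}$ as in Section~4.1 and substituting $n = kq + r$ for a fixed residue $r \in \{0, 1, \ldots, k-1\}$, the rotational factor collapses into a constant and one obtains componentwise
\[
M^{kq+r} s = \bigl(2\operatorname{Re}(a_i \gamma^r)\,\length{\lambda}^{kq+r} + c_i \rho^{kq+r}\bigr)_{i=1,2,3}.
\]
Thus, along each residue class the orbit is driven by only the two real algebraic bases $\length{\lambda}$ and $\rho$, with algebraic coefficients of bit-length polynomial in $\size{\mathcal{I}_T}$. Substituting this expression into the polynomial $p$ defining $T$ yields an exponential polynomial in two real bases whose sign eventually stabilises within each residue class---exactly the setting treated in the proof of Theorem~12. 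Applying that analysis, for each of the at most $288$ residue classes I would obtain a threshold $N_r = (2^{\size{\mathcal{I}_T}})^{O(1)}$ and a Boolean value $b_r$ such that $M^n s \in T$ iff $b_r = 1$ whenever $n > N_r$ with $n \equiv r \pmod{288}$. Setting $N = \max_r N_r$ and $X = \{r : b_r = 1\}$ completes the construction.

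The main obstacle is establishing the uniform bound $288$. The degree bound on $\gamma$ alone permits orders whose pairwise least common multiples grow much faster than $288$, so one must use the more refined observation that $\gamma^2$ lies in the field generated by the eigenvalues of $M$---a small abelian extension of $\rationals$---to rule out large prime factors in $\operatorname{ord}(\gamma)$. Once this uniform bound is in place, the reduction to Theorem~12 is essentially mechanical and the bound on $N$ follows from the corresponding bound in Section~5.
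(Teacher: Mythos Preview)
Your overall strategy coincides with the paper's: split the orbit into residue classes modulo the order $d$ of $\gamma$, observe that along each class the entries of $M^n s$ are real exponential polynomials in the two bases $|\lambda|^d$ and $\rho^d$, and then invoke the real-eigenvalue analysis (Appendix~A / Theorem~12), picking up the extra factor of $2$ from the possible sign alternation in that analysis. This matches Appendix~B step for step.

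Where you diverge is in justifying the bound on $\operatorname{ord}(\gamma)$, and here you make it harder than necessary. The paper's argument is one line: $\gamma$ has degree at most $12$, a primitive $d$-th root of unity has degree $\varphi(d)$, and the crude inequality $\varphi(d)\geq\sqrt{d}$ gives $d\leq 144$ directly. Your field-theoretic detour through $\gamma^2\in\rationals(\lambda,\overline{\lambda})$ is not needed for this. Your worry about ``pairwise least common multiples'' also seems to misread what the theorem asks: for a \emph{fixed} input there is a single $\gamma$ with a single order $d$, so the eventual period is simply $2d\leq 288$, and no LCM across different possible orders enters. (If one insists on reading the theorem's fixed modulus $288$ literally, one would need $2d\mid 288$, which neither your argument nor the paper's establishes; the paper is using $288$ as an upper bound on the period $2d$, which is all the downstream algorithm in Section~6 requires.)

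One small wording issue: within each residue class modulo $d$ the conclusion of Theorem~12 is a mod-$2$ condition, not a single Boolean, so you end up with $2d$ residue classes (each carrying a Boolean) rather than $d$; your text mixes the two pictures but the arithmetic is right.
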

Here once again $\size{\mathcal{I}_T} = \size{p} + \size{M} + \size{s}$, where $p$ is the polynomial defining $T$. In the next section we discuss how to utilise Theorems 12 and 13 in order to obtain a decision procedure for the relevant cases of the LTL Model-Checking Problem.

\section{Model-checking algorithm and its complexity}

\begin{lstlisting}[caption={Recursive model-checking algorithm for formulas with only bounded temporal operators.},label=list:8-6,captionpos=t,float,abovecaptionskip=-\medskipamount]
ModelCheck(formula F, starting point n)
case F = Until(F1, F2, upper bound B):
for i=0 to B do
if ModelCheck(F2, n+i) return true
if not ModelCheck(F1, n+i) return false
return false
case F = Release(F1, F2, upper bound B):
for i=0 to B do
if not ModelCheck(F2, n+i) return false
if ModelCheck(F1, n+i) return true
return true	
case F = Next(F1)
return ModelCheck(F1, n+1)	 
case F = And(F1, F2):
l = ModelCheck(F1, n)
r = ModelCheck(F2, n)
return l and r
case F = Or(F1, F2):
l = ModelCheck(F1, n)
r = ModelCheck(F2, n)
return l or r
case F = atomic semialgebraic T:
return Oracle(T, n)
\end{lstlisting}

In this section we summarize our algorithmic contribution. Suppose we are given $M \in \rationals^{3 \times 3}$, $s \in \rationals^3$ and an LTL formula $\varphi$ over semialgebraic $T_1, \ldots, T_m$ as the input. We describe a decision procedure for determining whether $\langle s, Ms, M^2s ,\ldots \rangle \models \varphi$. 

Let us first consider the complexity of determining, for a given $n$, $M \in \rationals^{3 \times 3}$, $s \in \rationals^3$ and a semialgebraic target $T$ defined via $p(x) \sim 0$, whether $M^n s \in T$. Using iterated squaring we can encode the statement $p(M^n s) \sim 0$ into the existential theory of real numbers using a formula of size $O(\size{M}\log n + \size{p} + \size{s})$. If the input is $M$, $s$ and an LTL formula $\varphi$ containing $T$, this can be written as $O(\I + \log n)$. Since the existential theory of real numbers can be decided in polynomial space (see, e.g., \cite{Renegar88}), an oracle for determining whether $M^n s $ is in a target set $T$ can be implemented using space polynomial in $\I + \log n$. 

We now move onto the main algorithm. As the first step, determine whether $M$ has three real eigenvalues $\rho_1, \rho_2, \rho_3$ or two complex eigenvalues $\lambda, \overline{\lambda}$ and a real eigenvalue $\rho$. If the latter is the case, additionally determine whether $\gamma = \frac{\lambda}{\length{\lambda}}$ is a root of unity or not. 

If $M$ only has real eigenvalues or $\gamma$ is a root of unity, we proceed by computing an explicit representation for the semilinear set $\Z{\varphi} =\{n \geq 0 : \langle M^ns, M^{n+1}s, \ldots \rangle \models \varphi\}$. We illustrate how this can be done by using Theorem 13 and repeatedly combining semilinear sets in case where $\gamma$ is a root of unity. In case $M$ has three real eigenvalues the same procedure can be applied to Theorem 12. 

We first compute an explicit representation for $\Z{T_i} =\{n \geq 0 : \langle M^ns, M^{n+1}s, \ldots \rangle \in T_i\}$ for each atomic $T_i$ in $\varphi$. 
To this end, we compute the value of $N_i = 2^{{\size{\mathcal{I}_{T_i}}}^{O(1)}}$ described in Section 5  for each $1 \leq i \leq m$ and then take the maximum $N = \max_{1 \leq i \leq m} N_i$. Next we determine $F_i = \{n \leq N : M^n s \in T_i\}$ and compute $X_i \subseteq \{0,1, \ldots, 287\}$ such that for $n > N$, $M^n s \in T_i$ if and only if $n \bmod 288 \in X$. These sets can be determined by making queries to the oracle of the form $M^{n}s \stackrel{?}{\in} T_i$ for $0 \leq n \leq N +288$, requiring $2^{\I^{O(1)}}$ space in total. Finally, from sets $F_i$, $X_i$ for $1 \leq i \leq m$ we can construct, for arbitrary formula $\varphi$, sets $F$ and $X$ such that for all $n \leq N$, $\langle M^ns, M^{n+1}s, \ldots \rangle \models \varphi$ if and only if $n \in F$ and for all $n > N$, $\langle M^ns, M^{n+1}s, \ldots \rangle \models \varphi$ if and only if $n \bmod 288 \in X$. It only remains to check whether $0 \in F$. Hence we have a decision procedure that is in $\mathop{\mathbf{EXPSPACE}}$ in $\I$.

If, on the other hand, $\gamma$ is not a root of unity, then we proceed by replacing each $\Release$ and $\Until$ operator in $\varphi$ with a bounded one. Suppose $\varphi_1 \Until \varphi_2$ is a subformula of $\varphi$. Using Theorem 11 we can compute an upper bound $B$ on return time $T(\varphi_2)$ of $\varphi_2$ with respect to $\langle s, Ms, M^2s, \ldots \rangle$. We then simply replace $\varphi_1 \Until \varphi_2$ in $\varphi$ with $\varphi_1 \Until^{\leq B} \varphi_2$ (``$\varphi_1$ remains true until $\varphi_2$ is true, and $\varphi_2$ becomes true within the first $B$ steps''), with the justification that at any time step $n$, if the formula $\varphi_2$ remains false for all suffixes $\langle M^{n+\delta} s, M^{n+\delta+1}s, \ldots\rangle$, $0 \leq \delta \leq B$, then $\varphi_2$ will remain false for all $\langle M^{n+\delta} s, M^{n+\delta+1}s, \ldots\rangle$, $\delta \geq 0$. Similarly, for a subformula of the form $\varphi_1 \Release \varphi_2$ we first compute bounds $B_1$ and $B_2$ on the return times $T(\varphi_1 \land \varphi_1)$ and $T(\lnot \varphi_2)$, respectively, and set $B = \max \{B_1, B_2\}$. Observe that $B$ is at most the bound stipulated in Theorem~11 on the return time of $\varphi_1 \Release \varphi_2$ as the latter has higher temporal operator depth. Finally, we replace $\varphi_1 \Release \varphi_2$ with the bounded version $\varphi_1 \Release^{\leq B} \varphi_2$ with the semantics that either $\varphi_1$ successfully releases $\varphi_2$ within the first $B$ steps or $\varphi_2$ remains true for the first $B$ steps. 

We have now reduced the original problem of checking whether the orbit $\langle s, Ms, M^2s, \ldots \rangle $ satisfies $\varphi$ to determining whether it satisfies $\varphi'$ with all operators bounded by at most $2^{2^{\I^{O(1)}}}$ steps. Moreover, note that our algorithm so far does not involve any manipulation of semialgebraic sets or algebraic numbers. In Listing 1 we describe a simple recursive procedure for determining whether a path satisfies such a formula $\varphi'$ with only bounded temporal operators starting from a time step $n$.
%The algorithm essentially replaces the operators $U$ with $U^{\leq T}$ and $R$ with $R^{\leq T}$, and exhausts all the candidates for witnesses (i.e. integers between 0 and $T$).

To analyse the complexity of our main algorithm, let $B$ be a maximum bound on a temporal operator in $\varphi'$ (i.e. maximum bound on the return time of a subformula of $\varphi$). Observe that during the run of the model-checking algorithm, all calls to the oracle are for time steps $n \leq D(\varphi') B = 2^  {2^ {\I^{O(1)}} }$, where $D(\varphi)$ is the temporal operator depth of $\varphi$ as defined in Section~4.4. Therefore, the total space required by the oracle is 
$  
O\left(\I +\log (D(\varphi) B) \right) = 2^{\I^{O(1)}}.
$
With respect to the oracle, our algorithm operates in $O(D(\varphi) \cdot \log (D(\varphi) B)) = 2^{\I^{O(1)}}$ space as it simply maintains at most $D(\varphi)$-many counters with $D(\varphi)B$ bits. Adding the two space requirements we conclude that our decision procedure lies in $\mathop{\mathbf{EXPSPACE}}$ in $\I$.

\section{Conclusion}
We have given an algorithm to model check an LTL formula on the orbit
of a linear dynamical system in dimension at most 3.  The procedure
reduces the LTL Model-Checking Problem to an equivalent bounded
model-checking problem, which can be solved directly.
The heart of the proof is the effective upper bound, given in Theorem
11, of the so-called return time of an LTL formula on a given orbit.
Establishing this bound requires the use of several number-theoretic
tools.  As we have noted in the introduction, there are formidable
obstacles to generalising this result to matrices of higher
dimensions, since the LTL Model-Checking Problem generalises numerous
longstanging open decision problems on linear dynamical systems.
Another direction for further work is to consider the problem of model
checking MSO, i.e., to generalise the logic.  Here we plan to explore
connections with the respective frameworks of Semenov~\cite{Semenov}
and Rabinovich~\cite{Rabinovitch} on decidable extensions of MSO with
almost periodic predicates.  Finally, in this work we have considered
the unique orbit determined by a fixed starting point.  But many
situations ask to quantify over different orbits, e.g., one could ask
whether there is a neighbourhood of a given point such that all orbits
starting in the neighbourhood satisfy a given LTL
formula---see~\cite{ShaullJournal} and~\cite{AAGT15} for work in this
direction.
 
\appendix
\section{The case of only real eigenvalues}
Let $M \in \rationals^{3 \times 3}$ be a matrix with three real eigenvalues $\rho_1, \rho_2, \rho_3$, and $S \subseteq \reals^3$ a semialgebraic set defined via a polynomial $p$. We proceed in a way to similar to what we did in Section 4.1. Converting $M$ to Jordan normal form, we can write $M = P^{-1}JP$. Depending on the multiplicity of eigenvalues, three scenarios are possible:
\begin{enumerate}
	\item $J =
	\begin{bmatrix}
	\rho_1 & 0 & 0 \\
	0 & \rho_2 & 0 \\
	0 & 0 & \rho_3
	\end{bmatrix} 
	$. We have $M^n = P^{-1}
	\begin{bmatrix}
	\rho_1^n & 0 & 0 \\
	0 & \rho_2^n & 0 \\
	0 & 0 & \rho_3^n
	\end{bmatrix}
	P$.
	\item $J =
	\begin{bmatrix}
	\rho_1 & 1 & 0 \\
	0 & \rho_2 & 0 \\
	0 & 0 & \rho_3
	\end{bmatrix} 
	$ and $\rho_1=\rho_2$. In this case, $M^n = P^{-1}
	\begin{bmatrix}
	\rho_1^n & n \rho_1^{n-1} & 0 \\
	0 & \rho_1^n & 0 \\
	0 & 0 & \rho_3^n
	\end{bmatrix}
	P$.
	\item $J =
	\begin{bmatrix}
	\rho_1 & 1 & 0 \\
	0 & \rho_2 & 1  \\
	0 & 0 & \rho_3
	\end{bmatrix} 
	$ and $\rho_1=\rho_2=\rho_3$. Then
	$M^n = P^{-1}
	\begin{bmatrix}
	\rho_1^n & n \rho_1^{n-1} & \frac{1}{2}n(n-1)\rho_1^{n-2} \\
	0 & \rho_1^n & n \rho_1^{n-1} \\
	0 & 0 & \rho_1^n
	\end{bmatrix}
	P$. 
\end{enumerate}
In all three cases, we can write

\[
M^n s = 
\begin{bmatrix}
A_1(n) \rho_1^n +  B_1(n) \rho_2^n + C_1(n) \rho_3^n\\
A_2(n) \rho_1^n +  B_2(n) \rho_2^n + C_2(n) \rho_3^n \\
A_3(n) \rho_1^n +  B_3(n) \rho_2^n + C_3(n) \rho_3^n
\end{bmatrix}
\]
where $A_i, B_i, C_i$, $1 \leq i \leq 3$ are polynomials (with real algebraic coefficients) of degree constant (at most 3) and height polynomial in $\size{M} + \size{s}$. Here by height of a polynomial we mean $\max_{1 \leq i \leq d} \length{c_i}$ where $c_i$ ranges over the coefficients of the polynomial. This is a straightforward generalization of height from polynomials with integer coefficients to polynomials with real algebraic coefficients. 

Let $p$ be a polynomial with integer coefficients defining a semialgebraic set $T$. By aggregating coefficients of $p(A_1(n) \rho_1^n +  B_1(n) \rho_2^n + C_1(n) \rho_3^n, A_2(n) \rho_1^n +  B_2(n) \rho_2^n + C_2(n) \rho_3^n, A_3(n) \rho_1^n +  B_3(n) \rho_2^n + C_3(n) \rho_3^n)$, we obtain
\begin{equation}
p(M^ns) = \sum_{0 \leq p_1, p_2, p_3 \leq deg(p)} 
\alpha_{p_1, p_2, p_3} (n) \rho_1^{np_1} \rho_2^{np_2} \rho_3^{np_3}
\end{equation}
where each $\alpha_{p_1, p_2, p_3}$ is a polynomial with real algebraic coefficients and degree polynomial and height exponential in $\size{\mathcal{I}_T} = \size{M} + \size{s} + \size{p}$. These size estimates can be proven in a similar way to size estimates established in Appendix C.

We now consider the question of computing a representation for $\Z{S}$ given an expression for $p(M^n s)$ of the form given in Equation~2. 
Let $P = \{\rho_1^{p_1} \rho_2^{p_2} \rho_3^{p_3} : \alpha_{p_1, p_2, p_3} \neq 0\} = \{\sigma_1, \sigma_2, \ldots, \sigma_k\}$ with $\length {\sigma_i} > \length{\sigma_{i+1}}$ for all $1 \leq i < k$, and denote by $\alpha_i$ the non-zero polynomial corresponding to $\sigma_i$ in the equation for $p(M^n s)$ above. Thus
\[
p(M^n s) = \alpha_1 (n) \sigma_1^n + \sum_{i=1}^k \alpha_i(n) \sigma_i^n.
\]
Our goal is to find $N$ large enough such that for all $n > N$,
\[
\length{\sum_{i=2}^k \alpha_i(n) \sigma_i^n } < \length{\alpha_1(n)\sigma_1^n}.
\]
This will alow us to discard all terms except for $\alpha_1 (n) \sigma_1^n$ when determining the sign of $p(M^n s)$ for $n$ sufficiently large. Let $q$ be a polynomial such that $k < q(\I)$, $\deg(\alpha_i) < q(\I)$ and $H(\alpha_i) < 2^{q(\I)}$ for $1 \leq i \leq k$ (note that these are degree and height of polynomials, not algebraic numbers). 
We estimate
\[
\length{\sum_{i=2}^k \alpha_i(n) \sigma_i^n } \leq 
\sum_{i=2}^k \length{\alpha_i(n) \sigma_i^n}  \leq
q(\I) (2n)^{q(\I)} 
\left(1 - \frac{1}{2^{q(\I)}}\right)^n  \length{\sigma_1}^n
\]
for all $n \geq 1$. Therefore, it suffices to find $N$ large enough such that for all $n > N$
\[
q(\I) (2n)^{q(\I)} 
\left(1 - \frac{1}{2^{q(\I)}}\right)^n \leq 1 
\]
which is equivalent to
\[
\log(q(\I)) + \log(2) + q(\I) \log(n) + n \log \left(1 - \frac{1}{2^{q(\I)}}\right) \leq 0.
\]
Since $\log \left(1 - \frac{1}{2^{q(\I)}}\right) \leq - \frac{1}{2^{q(\I)}}$ and $\log(n) \leq \sqrt{n}$, we choose $N$ to be at least the maximum of the largest root of the quadratic
\[
2 + 2q(\I)\sqrt{n} - n \frac{1}{2^{q(\I)}} \leq 0.
\]
and the largest positive root of $\alpha_1$. For such $N$, $N = 2^{\I^{O(1)}}$ and for all $n > N$, whether $M^n s$ is in atomic semialgebraic set $T$ or not only depends on the sign of $\sigma_1^n$. That is, after time $N$, the orbit is in $T$ either all the time, never, or every other time step. Stated differently, there exists $X \subseteq \{0,1\}$ such that for all $n > N$, $M^n s \in T$ if and only if $n \bmod 2 \in X$.

%To prove this:
%In the same vein to Section 4.1, it suffices to restrict $p$ to monomials.
%Let $q(x_1, x_2, x_3) = x^i y^j z^k$ be a monomial. Using the same methodology as before,
%\[
%\sum_{\substack{i_1 + i_2 + i_3 = i \\ j_1 + j_2 + j_3 = j 
%		\\ k_1 + k_2 + k_3 = k}}
%B({\overrightarrow{i} \overrightarrow{j} \overrightarrow{k}}) \cdot  
%A_1(n)^{i_1} B_1(n)^{i_2} C_1(n)^{i_3} \cdot
%A_2(n)^{j_1} B_2(n)^{j_2} C_2(n)^{j_3} \cdot
%A_3(n)^{k_1} B_3(n)^{k_2} C_3(n)^{k_3} \cdot
%\rho_1^{i_1 + j_1 + k_1} \rho_2^{i_2 + j_2 + k_2} \rho_3^{i_3 + j_3 + k_3}. 
%\]
%Our target form is 
%\[
%\sum_{0 \leq p_1, p_2, p_3 \leq i+j+k} 
%\alpha_{p_1, p_2, p_3} (n) \rho_1^{np_1} \rho_2^{np_2} \rho_3^{np_3}
%\] 
%Continuing in exactly the same way we can write
%\[
%\alpha_{p_1, p_2, p_3}(n) = \sum_{\substack{i_1 + k_1 + j_1 = p_1 \\ i_2 + j_2 + k_2 = j \\ i_3 + j_3+ k_3 = k}} 
%B({\overrightarrow{i} \overrightarrow{j} \overrightarrow{k}}) \cdot  
%A_1(n)^{i_1} B_1(n)^{i_2} C_1(n)^{i_3} \cdot
%A_2(n)^{j_1} B_2(n)^{j_2} C_2(n)^{j_3} \cdot
%A_3(n)^{k_1} B_3(n)^{k_2} C_3(n)^{k_3}.
%\]
%This has the required degree and height.

\section{The case where $\gamma$ is a root of unity}
Let $M \in \rationals^{3 \times 3}$ be a matrix with three eigenvalues $\lambda, \overline{\lambda}, \rho$, $\gamma = \frac{\lambda}{\length{\lambda}}$ and suppose $\gamma$ is a root of unity. That is, there exists $d \in \naturals$ such that $\gamma^d = 1$. We first argue that $d \leq 144$. 
Since $\lambda$ has degree $3$, $\gamma = \frac{\lambda}{\sqrt{\lambda \overline{\lambda}}}$ can have degree at most $12$. 
But since $\gamma$ is a $d$th root of unity, we also know that its defining polynomial is the $d$th cyclotomic polynomial $\varphi_d(x)$ whose degree is $\varphi(d)$, where $\varphi$ is the Euler's totient function. 
Therefore, $d$ must be such that $\varphi(d) \leq 12$.
Since $\varphi(d) \geq \sqrt{d}$, we obtain that $d \leq 144$. 

By diagonalising $M$ (see Section 4.1) we can write
\[
M^n s = 
\begin{bmatrix}
a_1 \lambda^n +  \overline{a_1} {\overline{\lambda}}^n + c_1 \rho^n\\
a_2 \lambda^n +  \overline{a_2} {\overline{\lambda}}^n + c_2 \rho^n \\
a_3 \lambda^n +  \overline{a_3} {\overline{\lambda}}^n + c_3 \rho^n
\end{bmatrix}.
\]
where $a_1, c_1, a_2, c_2, a_3, c_3$ are algebraic numbers. Suppose $n = m \mod d$. Since $\lambda = \gamma \length{\lambda}$,
\[
M^n s = 
\begin{bmatrix}
a_1 \length{\lambda}^n \gamma^m +  \overline{a_1} \length{\lambda}^n\overline{\gamma}^m + c_1 \rho^n\\
a_2 \length{\lambda}^n \gamma^m +  \overline{a_2} \length{\lambda}^n\overline{\gamma}^m + c_2 \rho^n\\
a_3 \length{\lambda}^n \gamma^m +  \overline{a_3} \length{\lambda}^n\overline{\gamma}^m + c_3 \rho^n\\
\end{bmatrix}
= \begin{bmatrix}
2\operatorname{Re}(a_1\gamma^m)\length{\lambda}^n + c_1 \rho^n\\
2\operatorname{Re}(a_2\gamma^m)\length{\lambda}^n + c_2 \rho^n\\
2\operatorname{Re}(a_3\gamma^m)\length{\lambda}^n + c_3 \rho^n\\
\end{bmatrix}.
\]
We thus split the sequence $\langle s, Ms, M^2s, \ldots \rangle$ into $d$ subsequences: for $0 \leq m < d$, define
$x^m_i = M^{id+m}s$ for $i \geq 0$. Then we have
\[
x^m_i = 
\begin{bmatrix}
2\operatorname{Re}(a_1\gamma^m)\length{\lambda}^m \length{\lambda^d}^i + c_1 \rho^m (\rho^d)^i\\
2\operatorname{Re}(a_2\gamma^m)\length{\lambda}^m \length{\lambda^d}^i + c_2 \rho^m (\rho^d)^i\\
2\operatorname{Re}(a_3\gamma^m)\length{\lambda}^m \length{\lambda^d}^i + c_3 \rho^m (\rho^d)^i\\
\end{bmatrix}.
\]
Observe that $\operatorname{Re}(a_i\gamma^m)$, $1 \leq i \leq 3$, and $\length{\lambda}^m$ are real algebraic constants of constant degree and height polynomial in $\size{M} + \size{s}$.

Let $T \subseteq \reals^3$ a semialgebraic set defined via the polynomial $p$. We now investigate when the sequence $(x^m_i)_{i \in \naturals}$ is in the $T$. Consider $p(x^m_i)$. By aggregating coefficients, we can write
\[
p(x^m_i) = \sum_{0 \leq p_1, p_2, p_3 \leq deg(p)} 
\alpha_{p_1, p_2, p_3} \rho_1^{np_1} \rho_2^{np_2} 
\]
where $\rho_1 = \length{\lambda^d}$, $\rho_2 = \rho^d$ and $\alpha_{p_1, p_2, p_3}$ are real algebraic numbers of degree polynomial and height exponential in $\size{\mathcal{I}_T} \leq \I = \size{M} + \size{s} + \size{p}$. 
The size estimates can be established as follows: in Appendix C, when aggregating coefficients, we treat $\lambda^n, \overline{\lambda}^n, \rho^n$ as symbols. Hence if we replace $\lambda$ with $\rho_1$, $\overline{\lambda}$ with $\rho_2$, $\rho$ with 0 and the coefficients $a_i, b_i$ with $2\operatorname{Re}(a_i\gamma^m)\length{\lambda}^m$ and $c_i \rho^m$ for $1 \leq  i \leq 3$, the argument remains valid.

Hence we conclude that there exists $N_m = 2^{\I^{O(1)}}$ such that for $n > N$, whether $x^m_n \in T$ depends only on the parity of $n$. Finally, recall that we have split the original sequence into $144$ subsequences. Taking 
\[
N = 144 \max_{0 \leq i < d} N_i = 2^{\I^{O(1)}}
\]
we obtain that there exists a computable subset $X$ of $\{0, 1, \ldots,  287\}$ such that for $n > N$, $M^n s \in T$ iff $n \mod 288 \in X$.

\section{Size estimates for algebraic numbers in Section 4}
In Section 4 we considered the case where $M$ has eigenvalues $\lambda, \overline{\lambda}$ and $\rho$. We showed that 
\[
M^n s = 
\begin{bmatrix}
a_1 \lambda^n +  \overline{a_1} \overline{\lambda}^n + c_1 \rho^n\\
a_2 \lambda^n +  \overline{a_2} \overline{\lambda}^n + c_2 \rho^n \\
a_3 \lambda^n +  \overline{a_3} \overline{\lambda}^n + c_3 \rho^n
\end{bmatrix}
\]
where $a_1, a_2, a_3$ and $c_1, c_2, c_3$ are all algebraic numbers with fixed degree and description length polynomial in $\size{M} + \size{s}$.
In this section we show that given a semialgebraic set $T$ defined via $p(x) \sim 0$, ${\sim} \in \{>, \geq\}$ 
and
\[
p(x_1, x_2, x_3) = \sum_{0 \leq i,j,k \leq \deg(p)} c_{i,j,k} x_1^i x_2^j x_3^k
\]
we can write 
$p(M^n s) =(a_1 \lambda^n +  \overline{a_1} \overline{\lambda}^n + c_1 \rho^n,
a_2 \lambda^n +  \overline{a_2} \overline{\lambda}^n + c_2 \rho^n,
a_3 \lambda^n +  \overline{a_3} \overline{\lambda}^n + c_3 \rho^n)$ as
\begin{equation*}
\sum_{0 \leq p_1, p_2, p_3 \leq deg(p)} 
\alpha_{p_1, p_2, p_3} \lambda^{np_1} \overline{\lambda}^{np_2} \rho^{np_3} 
+
\overline{\alpha_{p_1, p_2, p_3}} \lambda^{np_1} \overline{\lambda}^{np_2} \rho^{np_3}
\end{equation*}
where each $\alpha_{p_1, p_2, p_3}$ has degree polynomial in and height exponential in $\size{M} + \size{s} + \size{p}$.

It suffices to prove this for monomials of the form $x_1^i x_2^j x_3^k$ for some $i$,$j$,$k$. 
To see this, suppose that for each monomial $p_{i,j,k}=x_1^i x_2^j x_3^k$ appearing in $p(x_1, x_2, x_3)$, we can write $(a_1 \lambda^n +  \overline{a_1} \lambda_2^n + c_1 \rho^n)^i
(a_2 \lambda^n +  \overline{a_2} \lambda_2^n + c_2 \rho^n)^j
(a_3 \lambda^n +  \overline{a_3} \lambda_2^n + c_3 \rho^n)^k$  as 
\[
\sum_{0 \leq p_1, p_2, p_3 \leq i+j+k} 
\alpha^{i,j,k}_{p_1, p_2, p_3} \lambda^{np_1} \overline{\lambda}^{np_2} \rho^{np_3} 
+
\overline{\alpha^{i,j,k}_{p_1, p_2, p_3}} \lambda^{np_1} \overline{\lambda}^{np_2} \rho^{np_3}
\]
where each $\alpha^{i,j,k}_{p_1, p_2, p_3}$ has a degree polynomial and height exponential in $\size{\mathcal{I}_{i,j,k}} = \size{M} + \size{s} + \size{p_{i,j,k}}$ (recall that in our encoding scheme, the representation size of $p_{i,j,k}$ is at least $i+j+k$; see Section 3). Clearly, $\size{\mathcal{I}_{i,j,k}} \leq \I$. 
Now observe that
\[
\alpha_{p_1, p_2, p_3} = \sum_{1 \leq i,j,k \leq \deg(p)} c_{i,j,k}\alpha^{i,j,k}_{p_1, p_2, p_3}
\]
for each $\alpha_{p_1, p_2, p_3}$ where $c_{i,j,k}$ is the integer coefficient of $x_1^i x_2^j x_3^k$ in $p(x_1, x_2, x_3)$. Therefore,
\[
\deg(\alpha_{p_1, p_2, p_3}) = \sum_{1 \leq i,j,k \leq \deg(p)} \deg(\alpha^{i,j,k}_{p_1, p_2, p_3}) = \I^{O(1)},
\]
\[
H(\alpha_{p_1, p_2, p_3}) = \sum_{1 \leq i,j,k \leq \deg(p)} c_{i,j,k} H(\alpha^{i,j,k}_{p_1, p_2, p_3}) = 2^{\I^{O(1)}}.
\]

We now move onto proving the result described above for monomials. Let $x_1^i x_2^j x_3^k$ be a monomial. Using 
\begin{itemize}
	\item $\mathbf{i} \mathbf{j} \mathbf{k}$ as a shorthand for $\langle i_1, i_2, i_3, j_1, j_2, j_3, k_1, k_2, k_3\rangle$, and 
	\item and $B(\mathbf{i} \mathbf{j} \mathbf{k})$ for $\binom{i_1+i_2+i_3}{i_1,i_2,i_3}
	\binom{j_1+j_2+j_3}{j_1,j_2,j_3}
	\binom{k_1+k_2+k_3}{k_1,k_2,k_3}$
\end{itemize}
we can write $(a_1 \lambda^n +  \overline{a_1} \overline{\lambda}^n + c_1 \rho^n)^i
(a_2 \lambda^n +  \overline{a_2} \overline{\lambda}^n + c_2 \rho^n)^j
(a_3 \lambda^n +  \overline{a_3} \overline{\lambda}^n + c_3 \rho^n)^k$  as

\[
\sum_{\substack{i_1 + i_2 + i_3 = i \\ j_1 + j_2 + j_3 = j 
		\\ k_1 + k_2 + k_3 = k}}
B(\mathbf{i}\mathbf{j}\mathbf{k}) \cdot  
a_1^{i_1} \overline{a_1}^{i_2} c_1^{i_3} \cdot
a_2^{j_1} \overline{a_2}^{j_2} c_2^{j_3} \cdot
a_3^{k_1} \overline{a_3}^{k_2} c_3^{k_3} \cdot
\lambda^{i_1 + j_1 + k_1} \overline{\lambda}^{i_2 + j_2 + k_2} \rho^{i_3 + j_3 + k_3}. 
\]
We would like to write this expression as
\[
\sum_{0 \leq p_1, p_2, p_3 \leq i+j+k} 
\beta_{p_1, p_2, p_3} \lambda^{np_1} \overline{\lambda}^{np_2} \rho^{np_3} 
+
\overline{\beta_{p_1, p_2, p_3}} \lambda^{np_1} \overline{\lambda}^{np_2} \rho^{np_3}.
\] 
Matching the expressions we obtain
\[
\beta_{p_1, p_2, p_3} = \sum_{\substack{i_1 + k_1 + j_1 = i \\ i_2 + j_2 + k_2 = j \\ i_3 + j_3+ k_3 = k}} 
B(\mathbf{i}\mathbf{j}\mathbf{k}) \cdot  
a_1^{i_1} \overline{a_1}^{i_2} c_1^{i_3} \cdot
a_2^{j_1} \overline{a_2}^{j_2} c_2^{j_3} \cdot
a_3^{k_1} \overline{a_3}^{k_2} c_3^{k_3}.
\]
Observe that each $B(\mathbf{i}\mathbf{j}\mathbf{k})$ is a positive integer and when viewed as an algebraic number has degree 1 and height at most $3^{i+j+k}$. Hence each $B(\mathbf{i}\mathbf{j}\mathbf{k}) \cdot  
a_1^{i_1} \overline{a_1}^{i_2} c_1^{i_3} \cdot
a_2^{j_1} \overline{a_2}^{j_2} c_2^{j_3} \cdot
a_3^{k_1} \overline{a_3}^{k_2} c_3^{k_3}$ is an algebraic number with degree polynomial and height exponential in $\size{\mathcal{I}_{i,j,k}}$. Since each coefficient $\beta_{p_1, p_2, p_3}$ is a sum of polynomially many such numbers, $\beta_{p_1, p_2, p_3}$ also has degree polynomial and height exponential in $\size{\mathcal{I}_{i,j,k}}$.

\section{Bounding the return time of $\varphi$ in Section 4.4}
In this section we prove by induction on the temporal operator depth of $\varphi$ that there exists a polynomial $h$ such that for any LTL formula $\varphi$, $d(\varphi) = \frac{1}{
	2^{2^{h\left(\Ib + D(\varphi)\right)}}
}.$
The base case of $D(\varphi) = 0$ corresponds to $\varphi$ without a temporal operator and is handled in Lemma 8 (as $R(\varphi)=0$ for $\varphi$ without temporal operators). 
	
For the inductive step, let $\varphi$ be an arbitrary formula with $D(\varphi) > 0$, and $\varphi_1, \ldots, \varphi_k$ be the formulas with $D(\varphi_1) < D(\varphi_2) < \cdots < D(\varphi_k) < D(\varphi)$ such that
\[
d(\varphi) \geq 
\frac{1}{
	\left(
	k + 2 + 2 \pi\sum_{i=1}^{k} \left(\frac{2\pi}{d(\varphi_i)}\right)^
	{q(\Ib)}
	\right)^{q(\Ib)}
}
\]
as per Lemma 9. We use this lower bound in the following way:
\begin{equation*}
	\begin{split}
	d(\varphi) \geq \frac{1}{
		2^  {2^ {h\left(\Ib + D(\varphi)\right)} } 
	}
	& \impliedby
	\frac{1}{
		\left(
		k + 2 + 2\pi \sum_{i=1}^{k} \left(\frac{2\pi}{d(\varphi_i)}\right)^
		{q(\Ib)}
		\right)^{q(\Ib)}
	}
	\geq 
	\frac{1}{
		2^  {2^ {h\left(\Ib + D(\varphi)\right)} } 
	}
	\\
	& \impliedby
	\left(
	k + 2 + 2\pi \sum_{i=1}^{k} \left(\frac{2\pi}{d(\varphi_i)}\right)^
	{q(\Ib)} 
	\right)^{q(\Ib)}
	\leq 2^{2^ {h\left(\Ib + D(\varphi)\right)} }
	\end{split}
	\end{equation*} 
By using the induction hypothesis that $d(\varphi_i) \geq \frac{1}{
	2^{2^{h\left(\Ib + D(\varphi_i)\right)}}
} \geq \frac{1}{
	2^{2^{h\left(\Ib + D(\varphi) - 1\right)}}
}$ for each $1 \leq i \leq k$ and few simple upper bounds (e.g. $k \leq \Ib$), we obtain that
\begin{equation*}
d(\varphi) \geq \frac{1}{
	2^  {2^ {h\left(\Ib + D(\varphi_i)\right)} } 
}
\impliedby
\left(
2^  {B \cdot q(\Ib) + 2^ {h\left(\Ib + D(\varphi) - 1\right)} } 
\right)^{q(\Ib)} 
\leq 2^  {2^ {h\left(\Ib + D(\varphi)\right)} }
\end{equation*} 
for some sufficiently large constant $B$ that does not depend on $\Ib$ or any of the (sub)formulas. Hence it suffices to prove
\[
B \cdot q(\Ib)^2 + q(\Ib) 2^ {h\left(\Ib + D(\varphi) - 1\right)} \leq 2^ {h\left(\Ib + D(\varphi)\right)}
\]
which can be achieved by choosing $h\left(\Ib + D(\varphi)\right) = \left(\Ib + D(\varphi)\right)^K$ for $K$ sufficiently large with respect to constants $q$ and $B$. Again, this choice does not depend on the concrete values of $\Ib$ and $D(\varphi)$.

%%
%% Bibliography
%%

%% Please use bibtex, 
\bibliography{Main}

\end{document}